\newcommand{\ALG}{\textsf{ALG}\xspace}
\newcommand{\OPT}{\textsf{OPT}\xspace}
\title{Online Algorithms with Randomly Infused Advice}
\titlerunning{Online Algorithms with Randomly Infused Advice}
\author{Yuval Emek}{Technion, Israel}{}{}{}
\author{Yuval Gil}{Technion, Israel}{}{}{}
\author{Maciej Pacut}{TU Berlin, Germany}{}{}{}
\author{Stefan Schmid}{TU Berlin, Germany}{}{}{}
\authorrunning{Y. Emek, Y. Gil, M. Pacut and S. Schmid}
\keywords{Online algorithms, competitive analysis, advice}
\begin{document}

\maketitle

\begin{abstract}
We introduce a~novel method for the rigorous quantitative evaluation of online
algorithms that relaxes the ``radical worst-case'' perspective of classic
competitive analysis.
In contrast to prior work, our method, referred to as randomly infused advice
(RIA), does not make any assumptions about the input sequence and does not
rely on the development of designated online algorithms.
Rather, it can be applied to existing online randomized algorithms,
introducing a means to evaluate their performance in scenarios that lie
outside the radical worst-case regime.

More concretely, an online algorithm \ALG with RIA benefits from pieces of
advice generated by an omniscient but not entirely reliable oracle.
The crux of the new method is that the advice is provided to \ALG by writing
it into the buffer $\mathcal{B}$ from which \ALG normally reads its random
bits, hence allowing us to augment it through a very simple and non-intrusive
interface.
The (un)reliability of the oracle is captured via a~parameter
$0 \leq \alpha \leq 1$
that determines the probability (per round) that the advice is successfully
infused by the oracle;
if the advice is not infused, which occurs with probability
$1 - \alpha$,
then the buffer $\mathcal{B}$ contains fresh random bits (as in the classic
online setting).

The applicability of the new RIA method is demonstrated by applying it to
three extensively studied online problems:
paging, uniform metrical task systems, and online set cover.
For these problems, we establish new upper bounds on the competitive ratio of
classic online algorithms that improve as the infusion parameter
$\alpha$ increases.
These are complemented with (often tight) lower bounds on the competitive
ratio of online algorithms with RIA for the three problems.
\end{abstract}

\thispagestyle{empty}
\setcounter{page}{0}

\section{Introduction}

\emph{Competitive ratio} is a widely used metric for evaluating the
performance of \emph{online algorithms}.
It measures the ratio between the performance of an online algorithm and that
of an optimal offline (clairvoyant) algorithm, assuming a worst-case (i.e.,
adversarial) input sequence.
Early on, it has been observed (see, e.g., \cite{Young1994loose}) that in
practice, many online algorithms outperform their theoretical worst-case
guarantees.
Indeed, in realistic scenarios, the online algorithms tend to
``enjoy a good fortune'' and rarely encounter the theoretical pitfalls that
realize the competitiveness lower bounds (cf.~\cite{KoutsoupiasP00}).

This phenomenon has led to extensive research on the analysis of online
algorithms beyond the extreme worst-case nature of traditional competitive
analysis (see \cite{KarlinK2020survey} for a recent survey).
A prominent approach in this regard is to restrict the power of the adversary
that decides on the input sequence, giving rise to the methods of
locality of reference~\cite{AlbersF18,AlbersL16,AlbersFG05},
access graph~\cite{BorodinIRS95},
smoothed analysis~\cite{ReinekeS18,BecchettiLMSV03},
random arrival order~\cite{AlbersJ21,AlbersKL21,AlbersKL21a},
independent sampling~\cite{CorreaCFOT21},
diffused adversaries~\cite{KoutsoupiasP00},
and
distributional analysis~\cite{Rivest76,Frederickson80}.
Another approach is to relax the competitive analysis definition, as done in
resource augmentation~\cite{Sleator1985},
loose competitiveness~\cite{Young1994loose},
and competitiveness with high probability \cite{KommKKM14}.
See also the surveys~\cite{DorrigivL05,BoyarIL15} for additional measures.

In this paper, we wish to advance the study of (randomized) online algorithms
beyond worst-case competitive analysis by offering a radically new point of
view on the concept of ``enjoying a good fortune'' (in terms of avoiding the
competitiveness pitfalls).
Our approach does not restrict the power of the adversary, hence we do not
need to justify any assumptions on the request sequence.
Moreover, we use the standard definition of competitive analysis (with no
relaxations).
Last but not least, in contrast to some existing ``beyond worst-case''
methods, which are limited to certain types of online problems (e.g., locality
of reference and access graph), our new method is very general and can be
applied to seemingly any online problem.

So, how do we interpret ``good fortune'' on behalf of a randomized online
algorithm \ALG{} without making any assumptions on \ALG{}'s input sequence?
The answer is simple:
we look at the outcome of \ALG{}'s random coin tosses.
That is, to make \ALG{} more fortunate, all we have to do is to increase the
chances of getting good such outcomes.

This raises another question:
what makes one outcome of \ALG{}'s random coin tosses better than another?
To answer this question, we recruit an omniscient \emph{oracle} that generates
\emph{advice} for \ALG{} in each round of the execution.
The crux of our method, called \emph{randomly infused advice (RIA)}, is that
the oracle attempts to write this advice into the buffer $\mathcal{B}$ from
which \ALG{} normally reads its random bits.
To quantitatively control \ALG{}'s good fortune, we introduce an \emph{infusion
parameter}
$0 \leq \alpha \leq 1$,
which determines the probability that the advice is (successfully) infused by
the oracle in each round (independently);
if the advice is not infused --- an event occurring with probability
$1 - \alpha$
--- then the buffer $\mathcal{B}$ contains fresh random bits (as in the classic
online setting).
Refer to Figure~\ref{fig:ria} for an illustration.

\begin{figure}[t]
\centering 
\includegraphics[width=0.6\textwidth]{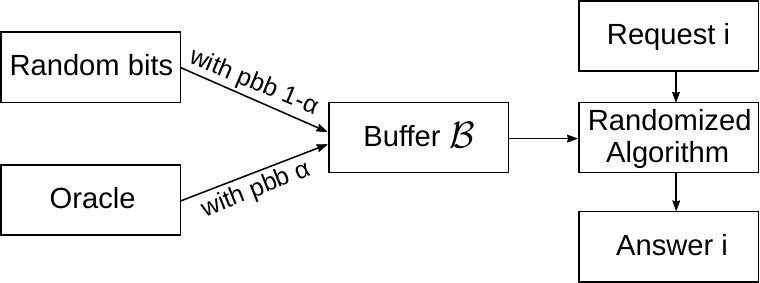}
\caption{\label{fig:ria}%
In each round, the algorithm reads its random bits from buffer $\mathcal{B}$.
Under the RIA model, the content of this buffer is replaced by the oracle's
advice for that round with probability $\alpha$, independently of other
rounds.}
\end{figure}

We emphasize that the interface between the randomized online algorithm \ALG{}
and the oracle is ``non-intrusive'', i.e., it is defined on top of the standard
computational model of (randomized) online algorithms (a.k.a.\ request-answer
games).
Therefore, the RIA method is suitable for the analysis of \textbf{existing}
online algorithms (including classic ones), facilitating the evaluation of
their performance beyond the extreme worst-case nature of traditional
competitive analysis.
This is in contrast to other advice models for online algorithms (discussed in
Section~\ref{sec:related-work}) in which the oracle-algorithm interface is
based on a designated buffer (or tape) from which the algorithm reads the
advice.
As such, these models require the development of \textbf{new}, model-specific,
algorithms and cannot be applied to existing ones.

Notice that the RIA model does not impose any limitations on the size of the
buffer~$\mathcal{B}$, and through it, on the advice size (or the number of
random bits) provided to \ALG in each round.
This raises the concern of making the online algorithm ``too powerful'' as the
(successfully) infused advice may hold  excessive information regarding the
future requests.
To overcome this concern, we restrict our attention to randomized online
algorithms which are \emph{randomness-oblivious}, namely, in each round,
\ALG has access to past requests, past answers, the current request, and the
current content of the buffer $\mathcal{B}$ (which contains the current advice
or  random bits), however \ALG cannot access the content of $\mathcal{B}$ in
previous rounds.
Indeed, all algorithms analyzed in this paper are randomness-oblivious.

The main motivation for studying the RIA method comes from analyzing the
performance of randomized online algorithms in scenarios that lie
outside the ``radical  worst-case'' regime, assumed in the classic online
computation literature.
In particular, this new method allows us to compare between different online
algorithms that exhibit the same performance guarantees in worst-case
scenarios, possibly separating between them in terms of their performance once
the scenarios get ``a little bit better'', and to do so without making any
explicit assumptions about the request sequence (or the probability
distribution thereof).

Another motivation is that the RIA model provides an abstraction for an
unreliable predictor (whose role is assumed by the oracle) whose ``mistakes''
take a~random (rather than worst-case) flavor, where the infusion parameter
$\alpha$ indicates the (expected) fraction of rounds in which the predictor is
correct.
In this regard, the non-intrusive interface between the online algorithm and
the oracle gives the RIA model a distinctive advantage over existing advice
models for online algorithms as it enables the analysis of standard online
algorithms in scenarios that include an unreliable predictor, while retaining
their worst-case guarantees.

\subsection{Our Contribution}

On top of the conceptual contribution that lies in introducing the RIA model,
we make the following technical contribution.

\paragraph*{Upper bounds.}
The applicability of the new RIA model is demonstrated on three extensively
studied online problems:
the \emph{paging} problem~\cite{Sleator1985}, for which we analyze the classic
RandomMark algorithm~\cite{FiatKLMSY91};
the uniform \emph{metrical task system (MTS)} problem~\cite{BorodinLS92}, for
which we analyze the classic UnifMTS algorithm;
and
the unweighted \emph{online set cover} problem~\cite{AlonAABN09}, for which we
analyze the influential primal-dual algorithm \cite[Ch.~4]{BuchbinderN09} with
randomized rounding (referred to as RandSC).
In all cases, our findings are similar to what is called ``smoothensss'',
``robustness'', and ``consistency'' in the literature dedicated to online
algorithms with predictions~\cite{LykourisV21,PurohitSK18}:
when augmented with RIA, the competitive ratio of these algorithms is never
worse than the original, and improves asymptotically as
$\alpha \rightarrow 1$.
Our results are cast in the following three theorems, where we denote the $k$-th
harmonic number by
$H_{k} \approx \log k$;
we emphasize that in all cases, neither the online algorithm nor the oracle
are aware of the infusion parameter $\alpha$.

\begin{theorem} \label{intro-theorem:random-mark}
The competitive ratio of RandomMark augmented with RIA with infusion parameter
$0 \leq \alpha \leq 1$
on instances of cache size $k$ is
at most
$\min \{ 2 H_{k}, \frac{2}{\alpha} \}$.
\end{theorem}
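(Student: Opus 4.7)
The plan is to analyze RandomMark augmented with an oracle that, using its knowledge of the future, writes into the buffer $\mathcal{B}$ an encoding of an arbitrary \emph{stale} page (an old cached page not requested again in the current marking phase) whenever such a page is present in the cache. With probability $\alpha$ this advice is infused and dictates the eviction; with probability $1-\alpha$ the buffer carries fresh random bits and RandomMark evicts a uniformly random unmarked page exactly as in the classical setting. The bound $2H_k$ will be inherited essentially unchanged from the analysis of Fiat et al., so the genuinely new work is to establish $2/\alpha$.

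The cornerstone is a structural invariant: throughout any marking phase, the set of pages lying outside the cache has size exactly $m$, where $m$ is the number of clean pages in the phase. Initially these out-of-cache pages are the $m$ new pages, and every subsequent fault swaps one out-of-cache page with one in-cache page (marked pages never leave the cache). Writing $n_s$ for the number of unmarked stales in cache, the invariant implies that once $n_s=0$ the $m$ out-of-cache pages are all stale, and hence \emph{no} subsequent fresh request can miss the cache. Consequently, every fault in the phase occurs while $n_s \geq 1$.

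Given this fact, at every fault the conditional probability that the evicted page is stale is at least $\alpha$, since the oracle's infusion succeeds independently with probability $\alpha$ and the stale-first rule picks a stale whenever one is present. Let $T$ denote the total number of faults in the phase, and let $X_i\in\{0,1\}$ indicate that the $i$th fault is a stale eviction. Then $\sum_{i=1}^T X_i = m$ deterministically (all $m$ stales leave the cache by the phase's end), while $E[X_i\mid\mathcal{F}_{i-1}] \geq \alpha$ on the event $\{T\geq i\}$. A standard Wald-style manipulation then yields $m \geq \alpha\cdot E[T]$, i.e., $E[T]\leq m/\alpha$. The $2H_k$ bound transfers to the RIA setting via a monotone coupling: under the stale-first oracle, the conditional probability that any specific requested-old page is chosen at a fault is $(1-\alpha)/(n_s+n_r) \leq 1/(n_s+n_r)$, no larger than under uniform random eviction, so the expected number of faults is no more than in the classical setting. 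Combined with the standard amortized lower bound $\OPT \geq m/2$ per phase, these yield competitive ratio at most $\min\{2H_k,2/\alpha\}$.

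The most delicate step is establishing the out-of-cache invariant and the ensuing ``no faults after $n_s=0$'' claim; verifying the invariant requires a careful case analysis of how each request type (hit, new-page fault, req-old re-fault) updates the in-/out-of-cache bookkeeping and the stale/req-old partition of the unmarked portion of the cache. Once this lemma is in place, the Wald-style bound is routine, and the transfer of the classical $mH_k$ bound proceeds without substantial new work.
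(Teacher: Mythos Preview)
Your proposal is correct and follows essentially the same route as the paper. Both arguments hinge on the identical structural lemma---once all ``not-requested-again'' pages (what the paper calls \emph{vanishing} pages; note that your use of ``stale'' clashes with the standard terminology, where stale means any unmarked page that was marked in the previous phase) have been evicted, no further faults occur in the phase---and both derive the $m/\alpha$ bound from the observation that each fault evicts such a page with probability at least $\alpha$, with your Wald-style computation and the paper's negative-binomial count being equivalent formalizations. The paper's oracle $O_{\mathit{ULFD}}$ (unmarked longest forward distance) is functionally the same as your ``arbitrary vanishing page'' rule, since a vanishing page always has greater forward distance than any other unmarked page; and your coupling transfer of the $2H_k$ bound matches the paper's re-derivation, which likewise argues that the oracle never increases the eviction probability of a non-vanishing unmarked page.
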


\begin{theorem} \label{intro-theorem:unif-mts}
The competitive ratio of UnifMTS augmented with RIA with infusion parameter
$0 \leq \alpha \leq 1$
on $n$-state instances is at most
$\min \{ 2 H_{n}, \frac{2}{\alpha} + 2 \}$.
\end{theorem}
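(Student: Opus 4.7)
The plan is to combine two separate analyses yielding the two bounds in the minimum. The $2H_n$ bound is the classical competitiveness guarantee of \textsf{UnifMTS} in the fully random regime ($\alpha = 0$), and I would show it carries over to arbitrary $\alpha$ by designing an oracle whose advice does not degrade the worst-case analysis. Concretely, the oracle I would use writes, in each round, a canonical action that aligns the algorithm with the state currently occupied by \OPT; by coupling the RIA-augmented algorithm against vanilla \textsf{UnifMTS} and observing that at each round the distribution of the algorithm's state stochastically corresponds to a valid execution of \textsf{UnifMTS} (possibly biased favorably toward \OPT), the $2H_n$ bound carries over essentially verbatim. The novel contribution is thus the $2/\alpha + 2$ bound.

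For the $2/\alpha+2$ bound I would leverage the standard phase decomposition of \textsf{UnifMTS}: partition the request sequence into phases such that \OPT pays at least a constant in each phase. Within a phase, the analysis proceeds as follows. Each round, the algorithm reads the buffer $\mathcal{B}$; with probability $\alpha$ the buffer contains the oracle's advice (pointing to \OPT's current state), and the algorithm aligns itself with \OPT. Let $\tau$ denote the first round in the phase where infusion succeeds; then $\tau$ is dominated by a geometric random variable with mean $1/\alpha$. Prior to $\tau$, the algorithm may reside in suboptimal states and incur task costs in excess of \OPT's, contributing in expectation at most $1/\alpha$ to the cost gap. Upon alignment, the algorithm pays one movement charge and then tracks \OPT's task cost for the remainder of the phase. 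Summing these contributions against \OPT's per-phase cost (bounded below by a constant) and carefully accounting for the number of \OPT movements within a single algorithm phase yields the claimed $2/\alpha + 2$ competitive factor.

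The main obstacle I foresee is handling \OPT's dynamics inside an algorithm phase: \OPT may move multiple times within a single phase, which both shifts the target of the oracle's advice and creates additional alignment events to account for. Carefully bounding the cost between two successive \OPT movements --- in particular, showing that the algorithm's expected cost over such a sub-interval is at most $O(1/\alpha)$ times the cost \OPT incurs there --- is where the analysis will require the most care. I expect this to parallel the argument behind Theorem~\ref{intro-theorem:random-mark}, where a similar ``wait for infusion'' mechanism underlies the $2/\alpha$ bound for paging; the additional $+2$ slack here accounts for the movement cost upon alignment with \OPT and for the task cost paid in the round of alignment itself.
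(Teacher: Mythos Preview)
Your proposal has a genuine gap in the choice of oracle and in how advice interacts with \textsf{UnifMTS}.

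First, the oracle. You propose advising the algorithm to move to \OPT's current state. The paper instead uses the oracle $O_{LTS}$ that advises moving to the state with the \emph{longest time until saturation for the current phase} --- a quantity determined entirely by the request sequence and the phase partition, with no reference to \OPT. This distinction is crucial: once \textsf{UnifMTS} reaches the $O_{LTS}$-advised state $s^*$, it stays there for the remainder of the phase, because by definition $s^*$ is the last state to saturate. Your oracle offers no such guarantee; \OPT's current state may saturate early, forcing further transitions, and the obstacle you flag (\OPT moving multiple times within a phase) is a real difficulty for your approach. With $O_{LTS}$ that obstacle simply does not arise.

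Second, the timing of advice. \textsf{UnifMTS} consults the buffer $\mathcal{B}$ only in rounds where it must transition (its current state becomes saturated); in all other rounds it stays put regardless of $\mathcal{B}$, and being randomness-oblivious it cannot store the buffer for later. Hence ``the first round in the phase where infusion succeeds'' is not the relevant quantity; what matters is the number of \emph{transitions} until infusion succeeds at a transition round. That number has expectation $1/\alpha$, and each transition costs at most $2$ (one for the move, at most one for processing before the next saturation), giving $2/\alpha$. The final advised transition to $s^*$ adds at most $2$ more, yielding $2/\alpha + 2$. Your accounting (``task costs in excess of \OPT's, contributing in expectation at most $1/\alpha$'') does not match this structure.

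Finally, for the $2H_n$ bound no coupling argument is needed: the standard recurrence on the expected number of transitions (at most $H_n$) holds regardless of which unsaturated state the advice selects, so any oracle --- including $O_{LTS}$ --- preserves this bound automatically.
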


\begin{theorem} \label{intro-theorem:primal-dual}
The competitive ratio of RandSC augmented with RIA with infusion parameter
$0 \leq \alpha \leq 1$
on instances with $n$ elements and maximum element degree $d$ is at most
$O (\min \{ \log d \log n, \frac{\log n}{\alpha} \})$.
\end{theorem}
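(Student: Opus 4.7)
My plan is to exhibit a single oracle strategy under which both bounds in the minimum hold simultaneously, and then combine an RIA-driven bound on the number of \emph{active} rounds with the classical primal-dual and randomized-rounding analyses of RandSC. Fix an optimal set cover $O^\star$ with $|O^\star| = \OPT$, and call round $t$ \emph{active} if its element $e_t$ is not covered by the current integral cover $C$. In every active round I would let the oracle pick some $S^\star \in O^\star \setminus C$ with $e_t \in S^\star$ (such a set must exist, for otherwise $e_t$ would already be covered by $C$) and write into the buffer $\mathcal{B}$ a bit string that deterministically forces the rounding step of round $t$ to insert exactly $S^\star$ into $C$ and nothing else --- e.g., by pushing the rounding threshold read for $S^\star$ down to $0$ and the thresholds of all other sets up to $1$.

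The heart of the analysis is a bound on the number $X$ of active rounds. Let $X_a \leq X$ be the number of active rounds in which infusion actually takes place. First I would observe that $X_a \leq \OPT$ deterministically, because each infused active round appends a previously-absent element of $O^\star$ to $C$, and once $O^\star \subseteq C$ no subsequent round can be active. Next, since activity in round $t$ depends only on the contents of $\mathcal{B}$ in rounds strictly earlier than $t$ (this is precisely the randomness-oblivious property of RandSC) while the round-$t$ infusion event is independent of everything prior, I would conclude $\mathbb{E}[X_a] = \alpha \cdot \mathbb{E}[X]$ and hence $\mathbb{E}[X] \leq \OPT/\alpha$.

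I would then bound the fractional cost $F = \sum_S x_S$ in two complementary ways. The classical primal-dual argument concerns only the deterministic multiplicative-update rule and is oblivious to rounding, so it still yields $F \leq O(\log d) \cdot \OPT$ under RIA. A per-round accounting gives a second bound: each active round raises $F$ by only $O(1)$ (the doubling loop stops as soon as $\sum_{S \ni e_t} x_S \in [1,2]$, and the one-time bootstrap contribution of each set can be charged to the active round that first involves it), yielding $F = O(X)$ and therefore $\mathbb{E}[F] \leq O(\OPT/\alpha)$. Combining, $\mathbb{E}[F] \leq O(\min\{\log d,\, 1/\alpha\}) \cdot \OPT$. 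Finally I would split the expected cost of $\ALG$ by round type: infused rounds add at most $X_a \leq \OPT$ sets in total (one per round by construction), while in each non-infused round the buffer contains fresh uniform bits, so the standard randomized-rounding bound applies per round and gives an expected contribution of $O(\log n) \cdot \Delta F_t$ in that round and $O(\log n) \cdot F$ overall. Plugging in the bound on $\mathbb{E}[F]$ yields $\mathbb{E}[\ALG] \leq O(\min\{\log d \log n,\, \log n / \alpha\}) \cdot \OPT$, as required.

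The hard part will be the $O(1)$ per-round bound on $\Delta F_t$: the doubling half of the update is easy to control, but the bootstrap term $1/(|S|\, c_S)$ requires an amortized argument so that its cumulative contribution in one active round is $O(1)$ rather than $\Omega(d)$. The independence argument $\mathbb{E}[X_a] = \alpha\,\mathbb{E}[X]$ is the other sensitive point, and it is exactly where we use that RandSC is randomness-oblivious --- if the algorithm were allowed to remember past buffer contents, the oracle could smuggle information about future requests into seemingly random bits and the whole charging argument would collapse.
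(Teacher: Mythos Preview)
Your overall plan mirrors the paper's argument closely: bound the expected number of ``selection'' rounds by $\OPT/\alpha$ (using that each infused such round contributes a fresh set of the optimal cover) and then charge $O(\log n)$ per selection round. The independence step $\mathbb{E}[X_a]=\alpha\,\mathbb{E}[X]$ and the per-round $O(1)$ bound on the fractional increase are both fine (the latter is in fact immediate here, since the paper's fractional rule performs a \emph{single} update, after which $\sum_{S\ni e}x_S\in[1,3)$; there is no doubling loop to amortize).

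There is, however, a genuine gap in your oracle construction. RandSC does nothing when the arriving element is already covered by the \emph{fractional} solution, relying on the invariant that fractional coverage implies integral coverage with high probability. That invariant holds because every fractional increase $\delta_{S,j}$ is matched by a selection probability of $\Theta(\log n)\cdot\delta_{S,j}$ for $S$ in round $j$. Your oracle breaks this coupling: by pushing the thresholds of all sets other than $S^\star$ up to~$1$ in infused rounds, you zero out their selection probabilities while the fractional increases remain. A concrete failure at $\alpha=1$: let $O^\star=\{A_1,A_2,A_3\}$ with $A_1=\{e_1\}$, $A_2=\{e_2\}$, $A_3=\{e'\}$, and add $D_1=\{e_1,e'\}$, $D_2=\{e_2,e'\}$. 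On the sequence $e_1,e_2,e'$, the first two rounds are infused selection rounds in which your oracle inserts $A_1,A_2$ and suppresses $D_1,D_2$; when $e'$ arrives it is fractionally covered ($x_{D_1}+x_{D_2}=1$) but not integrally covered, so RandSC reads no bits, does nothing, and the output is infeasible.

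The paper's oracle $O_{\mathit{boost}}$ avoids exactly this pitfall: its advice bit string selects every set of $\mathcal{F}(e)\cap O^\star$ deterministically \emph{and} each set of $\mathcal{F}(e)\setminus O^\star$ with its original probability $p_S$. This leaves the feasibility lemma intact verbatim while still ensuring that each infused selection round adds at least one new element of $O^\star$; the extra cost of the non-optimal sets in infused rounds is then absorbed into the same $O(\log n)$-per-round bound you already use for the non-infused rounds. A related secondary point: your ``active'' rounds should coincide with the paper's selection rounds (element uncovered by \emph{both} the fractional and the integral solution), since those are the only rounds in which RandSC reads $\mathcal{B}$ at all; under your definition there can be active rounds with no rounding step, in which the oracle cannot act and your bound $X_a\le\OPT$ need not hold.
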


\paragraph*{Lower bounds.}
On the negative side, we prove that the upper bound promised in
Theorem~\ref{intro-theorem:random-mark} is asymptotically tight for the class
of \emph{lazy} algorithms, which are not allowed to change their cache
configuration unless there is a page miss.

\begin{theorem} \label{intro-theorem:paging-lazy-lower-bound}
There does not exist a~lazy (randomness-oblivious) online paging algorithm
augmented with RIA with infusion parameter
$0 \leq \alpha \leq 1$
whose competitive ratio on instances of cache size $k$ is better than
$\min \{ H_{k}, \frac{1}{\alpha} \}$.
\end{theorem}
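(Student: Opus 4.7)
The plan is to apply Yao's minimax principle: it suffices to exhibit a probability distribution $D$ over input sequences such that any deterministic lazy online paging algorithm---with any fixed random bits, fixed infusion coin flips, and any oracle advice strategy---incurs expected competitive ratio at least $\min\{H_k,1/\alpha\}$ on $\sigma \sim D$. I split the analysis into two regimes based on $\alpha$, using a (possibly different) distribution in each.

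For the regime $\alpha\le 1/H_k$, I would take $D$ to be the classical hard distribution for randomized paging: $N$ requests drawn i.i.d.\ uniformly from $\{1,\ldots,k+1\}$, on which $\mathbb{E}_{\sigma\sim D}[\OPT(\sigma)]=\Theta(N/((k+1)H_k))$ via the LFD phase analysis. I would then adapt the classical Fiat--Karp--Luby--McGeoch--Sleator--Young argument: since the algorithm is randomness-oblivious, on each \emph{unadvised} round (probability $1-\alpha$) its eviction depends only on the request history and the current random bits in $\mathcal{B}$, hence is stochastically independent of the yet-unseen next request, which is uniform over the $k+1$ pages. Consequently the per-request miss rate stays $\Omega(1/(k+1))$, and the expected competitive ratio is at least $\Omega(H_k)$ so long as $\alpha$ is small enough that advised rounds cannot dominate a typical marking phase.

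For the regime $\alpha>1/H_k$, I would use an adversarial distribution crafted so that each ``phase'' requires the algorithm to make a specific, hard-to-guess eviction---for instance, a sequence built from cyclic two-page alternations preceded by a fresh page, where the ``correct'' eviction is a single distinguished page out of the $k$ currently cached ones. A geometric-waiting-time analysis on the cache state then shows that the expected number of misses per $\OPT$ miss is at least $\Omega(1/\alpha)$, since each unadvised round has only a $O(1/k)$ chance of guessing right, and so one must on average wait for an advised round, which arrives every $1/\alpha$ rounds.

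The hardest part in each regime will be robustness against arbitrary oracle strategies: in the first, I must rigorously argue that no oracle, however omniscient, can leverage its few advised rounds to beat the $H_k$ barrier on uniform i.i.d.\ inputs (the key leverage being the randomness-obliviousness condition, which prevents the algorithm from stitching oracle hints across rounds through memory of past random bits); in the second, the challenge is to exhibit a distribution on which the oracle's advice, however intelligent, cannot be ``spent'' efficiently enough to avoid the $\Omega(1/\alpha)$ geometric penalty---most naturally achieved by making each phase decision-critical and symmetric, so that even the best advice saves only one miss per phase.
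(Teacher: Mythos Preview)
Your Yao setup has a conceptual gap: you cannot simply ``fix the infusion coin flips'' when derandomizing, because the infusion randomness is not under the algorithm's control, and the target bound $\min\{H_k,1/\alpha\}$ is a function of the \emph{distribution} parameter $\alpha$, not of any particular realization. (If the infusion pattern were fixed to ``advice every round,'' the oracle could supply optimal evictions and no nontrivial bound would hold.) The paper resolves this by folding the $\alpha$-Bernoulli infusion process into the \emph{input} distribution: each request is tagged with an advised/unadvised bit drawn independently with probability $\alpha$, and one then lower-bounds every deterministic advice-oblivious algorithm--oracle pair against this joint distribution.

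More importantly, you miss the one place where laziness does real work. A lazy algorithm changes nothing on a hit, and a randomness-oblivious algorithm cannot carry the contents of $\mathcal{B}$ forward; hence advice delivered in a non-miss round is \emph{entirely wasted}. The paper's $1/\alpha$ bound is then immediate: among the miss rounds of a phase, the first one that is also advised arrives after in expectation $1/\alpha$ misses, so the cost incurred before any usable advice is already $1/\alpha$. Your second-regime construction (``cyclic two-page alternations'') does not exploit this; your argument that ``each unadvised round has only a $O(1/k)$ chance of guessing right'' charges $1/k$ per \emph{request} rather than $1$ per \emph{miss}, which is exactly how one lands at the weaker $1/(k\alpha)$ bound of the non-lazy theorem rather than $1/\alpha$. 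Relatedly, the two-regime, two-distribution split is unnecessary: the paper uses the single classical distribution (uniform over $k+1$ pages, excluding the previous request) for all $\alpha$, and splits cases \emph{per phase} on whether advice arrives at some miss round in that phase, obtaining $\min\{H_k,1/\alpha\}$ per phase against an offline cost of $1$ per phase.
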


Omitting the restriction to lazy algorithms, we can establish a~weaker lower
bound.

\begin{theorem} \label{intro-theorem:paging-lower-bound}
There does not exist a (randomness-oblivious) online paging algorithm
augmented with RIA with infusion parameter
$0 \leq \alpha \leq 1$
whose competitive ratio on instances of cache size $k$ is better than
$\min \{ H_{k}, \frac{1}{k \cdot \alpha} \}$.
\end{theorem}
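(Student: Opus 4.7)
The plan is to apply Yao's principle with a carefully chosen adversarial distribution $\mu$ over request sequences. The natural starting point is the classical adversary of Fiat, Karp, Luby, McGeoch, Sleator, and Young: sequences of length $n$ drawn IID uniformly from a set of $k+1$ pages, for which the standard Belady-based analysis yields $\mathbb{E}_{\mu}[\OPT] = \Theta(n/(k H_{k}))$.

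For the $H_{k}$ component of the bound, I would reduce to the classical randomized paging lower bound. The key observation is that in the regime where $\alpha$ is small (specifically, $\alpha \leq 1/(k H_{k})$), the per-round contribution of advice to the algorithm's expected cost is negligible, so any RIA algorithm against $\mu$ essentially behaves like a standard randomized paging algorithm without advice, for which the $H_{k}$ lower bound is already known.

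For the $1/(k\alpha)$ component, a new argument is needed that carefully accounts for the benefit of RIA without overcounting. The heart of the argument is the claim that each advice infusion can save at most $O(k)$ future page faults for a non-lazy algorithm. Intuitively, when the algorithm receives advice at round $t$, it learns the entire future request sequence and can reconfigure its cache (by pre-fetching up to $k$ pages) to match the Belady-optimal configuration going forward. However, since the cache holds only $k$ pages, the reconfiguration can pre-position at most $k$ pages; once the algorithm experiences enough unplanned rounds without fresh advice, its cache drifts away from the Belady configuration and further advice is needed to re-sync. This is precisely what distinguishes the non-lazy bound $1/(k\alpha)$ from the stronger lazy bound $1/\alpha$ of Theorem~\ref{intro-theorem:paging-lazy-lower-bound}, in which each advice infusion can influence only a single eviction decision (hence saves at most one miss).

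Formalizing the ``at most $k$ savings per advice'' claim under the randomness-obliviousness restriction (which prevents the algorithm from combining information across advice infusions) is, I expect, the main technical obstacle. A natural approach is to introduce a potential function measuring the discrepancy between the algorithm's cache and the Belady-optimal cache at each round, and to argue that this potential can drop by at most $O(k)$ per advice event. Once this is established, combining the resulting savings bound with the $\Theta(n/k)$ baseline algorithm cost against $\mu$ (by the classical analysis) and the $\Theta(n/(k H_{k}))$ bound on $\mathbb{E}_{\mu}[\OPT]$ yields the claimed $1/(k\alpha)$ ratio via a routine case analysis that distinguishes the small-$\alpha$ and large-$\alpha$ regimes.
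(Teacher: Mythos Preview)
The Yao setup and the $k{+}1$-page input distribution are the right starting point, but your accounting for the $1/(k\alpha)$ term via ``each advice infusion saves at most $O(k)$ faults'' does not go through. Grant the claim and subtract the expected savings $O(k)\cdot\alpha n$ from the baseline cost $\Theta(n/k)$; dividing by $\mathbb{E}[\OPT]=\Theta(n/(kH_k))$ gives a ratio of $H_k\bigl(1-O(k^2\alpha)\bigr)$, which is already negative for $\alpha=\Theta(1/k^2)$ and never isolates a $1/(k\alpha)$ term. No potential-function refinement of the drift-from-Belady idea repairs this, because the defect is in the shape of the bound, not in how tightly the drift is tracked.

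The paper takes a different route. It partitions the random sequence into $k$-phases (a phase ends once all $k{+}1$ pages have appeared), so that \OPT\ pays exactly $1$ per phase, and then lower-bounds the algorithm's cost phase by phase. The key structural observation is that advice from earlier phases is useless in the current one: every phase requests all $k{+}1$ pages, so whatever cache configuration the algorithm carries into the phase there is a clean page, and (because the algorithm is randomness-oblivious) the per-request miss probability stays at $1/k$ until fresh advice arrives inside the phase. From there the in-phase cost is bounded by cases: if no advice arrives during the phase, the classical argument gives $H_k$; if advice does arrive, the expected number of requests preceding the first advice is $1/\alpha$, each incurring expected cost $1/k$, for a total of $1/(k\alpha)$. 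Taking the minimum and summing over phases gives the theorem. In particular, the $1/k$ in the bound is the per-request miss probability under the random input, not a cap on how many faults one advice can avert; and the gap between the non-lazy bound $1/(k\alpha)$ and the lazy bound $1/\alpha$ of Theorem~\ref{intro-theorem:paging-lazy-lower-bound} is that a lazy, randomness-oblivious algorithm can act on advice only in miss rounds, so one counts $1/\alpha$ \emph{misses} before the first useful advice rather than $1/\alpha$ \emph{requests}.
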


The uniform MTS problem generalizes the paging problem on instances that
include
$n = k + 1$
pages.
As Theorems \ref{intro-theorem:paging-lazy-lower-bound} and
\ref{intro-theorem:paging-lower-bound} hold (already) for such instances,
their promised lower bounds are transferred to the uniform MTS problem, where
laziness translates to online MTS algorithms that may switch state only when
the processing cost is positive~\cite{FiatM03} (an algorithm class that
includes UnifMTS).

\begin{theorem} \label{intro-theorem:unif-mts-lazy-lower-bound}
There does not exist a lazy (randomness-oblivious) online uniform MTS
algorithm augmented with RIA with infusion parameter
$0 \leq \alpha \leq 1$
whose competitive ratio on $n$-state instances is better than
$\min \{ H_{n - 1}, \frac{1}{\alpha} \}$.
\end{theorem}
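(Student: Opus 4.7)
The plan is to reduce to the paging lower bound (Theorem~\ref{intro-theorem:paging-lazy-lower-bound}) via the standard embedding of paging into uniform MTS. Concretely, I would interpret a paging instance with cache size $k$ over a universe of $k+1$ pages as a uniform MTS instance on $n=k+1$ states: a state corresponds to a cache configuration (equivalently, to the single page \emph{missing} from the cache), and any two distinct states are at unit distance. A paging request to page $p$ translates into a task whose processing cost is $0$ in every state that keeps $p$ in the cache and $+\infty$ (or any sufficiently large finite value exceeding the diameter) in the unique state that evicts $p$. Under this encoding, the cost paid by any algorithm on the MTS instance coincides (up to the inconsequential processing of cached pages) with its paging cost.

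Next I would check that the reduction respects all three auxiliary properties required by the target theorem: (i)~\emph{laziness}: a paging algorithm changes its cache only on a miss iff the corresponding MTS algorithm changes state only in rounds of positive processing cost, because the only tasks of positive processing cost are exactly those that force a state change; (ii)~\emph{randomness-obliviousness}: since the reduction is a deterministic, per-round translation of requests and answers that does not reference the buffer $\mathcal{B}$, an MTS algorithm that only accesses the current content of $\mathcal{B}$ yields a paging algorithm with the same property; (iii)~\emph{RIA interface}: the oracle's advice for the MTS algorithm can be used verbatim as advice for the derived paging algorithm (with the same infusion parameter $\alpha$), since both algorithms read from $\mathcal{B}$ in the same rounds.

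With the reduction in place, the lower bound is transferred almost mechanically. Suppose, toward a contradiction, that there exists a lazy randomness-oblivious online uniform MTS algorithm with RIA of parameter $\alpha$ whose competitive ratio on $n$-state instances is strictly better than $\min\{H_{n-1},1/\alpha\}$. Specializing it to the paging-embedded instances described above and composing with the translation would produce a lazy randomness-oblivious paging algorithm with RIA of parameter $\alpha$ with competitive ratio on cache-size-$(n-1)$ instances strictly better than $\min\{H_{n-1},1/\alpha\}$, contradicting Theorem~\ref{intro-theorem:paging-lazy-lower-bound} applied with $k=n-1$.

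\paragraph*{Main obstacle.}
The only delicate point I would have to argue carefully is the laziness correspondence: the natural worry is that an MTS algorithm might exploit the freedom to switch states in zero-processing rounds to ``pre-position'' itself, which has no analogue in lazy paging. However, this is precisely what the lazy restriction rules out on the MTS side, so the translation is faithful. Modulo this routine check (and the handling of the $+\infty$ processing cost by a sufficiently large finite value), the theorem follows directly from the paging lower bound.
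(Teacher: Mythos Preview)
Your proposal is correct and matches the paper's own argument: the paper simply remarks that the uniform MTS problem generalizes paging on instances with $n=k+1$ pages, that Theorem~\ref{intro-theorem:paging-lazy-lower-bound} already holds for such instances, and that laziness for MTS (switching state only when the processing cost is positive) corresponds to laziness for paging, so the bound transfers with $k=n-1$. Your writeup spells out the standard embedding and the preservation of laziness, randomness-obliviousness, and the RIA interface in more detail than the paper does, but the approach is identical.
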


\begin{theorem} \label{intro-theorem:unif-mts-lower-bound}
There does not exist a (randomness-oblivious) online uniform MTS algorithm
augmented with RIA with infusion parameter
$0 \leq \alpha \leq 1$
whose competitive ratio on $n$-state instances is better than
$\min \{ H_{n - 1}, \frac{1}{(n - 1) \cdot \alpha} \}$.
\end{theorem}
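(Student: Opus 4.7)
The plan is to transfer the lower bound of Theorem~\ref{intro-theorem:paging-lower-bound} via the standard encoding of paging with cache size $k$ and page universe of size $k + 1$ as a uniform MTS on $n = k + 1$ states. First I would fix the bijection identifying MTS state $i$ with the paging configuration in which page $i$ is the unique page absent from the cache; under this identification, the unit metric between MTS states matches the cost of a page swap. A paging request for page $p$ is encoded as the task vector $\tau^{(p)}$ with $\tau^{(p)}_j = 0$ for $j \neq p$ and $\tau^{(p)}_p = M$, where $M$ is a large finite constant---chosen larger than $c$ times the maximal offline cost of any paging instance arising in the lower bound construction---that forces any claimed $c$-competitive MTS algorithm to vacate state $p$ before processing $\tau^{(p)}$.

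Next I would carry out the simulation. Given a randomness-oblivious online uniform MTS algorithm $A$ with competitive ratio $c$ under RIA with infusion parameter $\alpha$, define a paging algorithm $A'$ that, upon each request for page $p$, feeds $A$ the task $\tau^{(p)}$ and mirrors each of $A$'s state transitions as the corresponding eviction-plus-load in its cache. By the calibration of $M$, whenever $A$ is presented with $\tau^{(p)}$ while in state $p$ it must move to some other state, incurring move cost $1$ that exactly reproduces $A'$'s page-fault cost; every other state transition of $A$ (including non-lazy ones) is likewise charged unit cost in both problems, so the realized cost of $A$ on the induced MTS instance equals that of $A'$ on the paging instance. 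Applying the same correspondence to the offline optimum shows that the two optima coincide, and the RIA interface carries over verbatim: the oracle for $A'$ simply forwards the advice intended by the MTS oracle into the shared buffer $\mathcal{B}$, the infusion parameter $\alpha$ is preserved round by round, and randomness-obliviousness of $A'$ is inherited from $A$. Consequently $A'$ is a randomness-oblivious $c$-competitive online paging algorithm with RIA, and Theorem~\ref{intro-theorem:paging-lower-bound} instantiated with $k = n - 1$ forces $c \geq \min\{H_{n - 1}, \frac{1}{(n - 1) \alpha}\}$.

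The main obstacle I anticipate is calibrating the finite penalty $M$: since the paging lower bound is proved via a specific randomized adversarial request distribution, one must argue that a single value of $M$, chosen as a function of the parameters of that distribution but \emph{independently} of the specific MTS algorithm under consideration, suffices to force every candidate algorithm that would violate the claimed MTS bound into the simulation regime described above. A minor additional subtlety is that $A$ may perform gratuitous (non-lazy) state changes, which translate into unnecessary cache swaps in $A'$; since both problems charge the same unit cost per swap, such moves inflate the costs of $A$ and $A'$ by the same amount and therefore do not threaten the reduction.
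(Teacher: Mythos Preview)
Your proposal is correct and follows the same route as the paper: the paper's entire argument for this theorem is the one-sentence observation preceding it, namely that uniform MTS on $n$ states generalizes paging with cache size $k=n-1$ and $k+1$ pages, so the paging lower bound of Theorem~\ref{intro-theorem:paging-lower-bound} transfers directly. Your write-up simply fleshes out this reduction explicitly (the state--configuration bijection, the large-penalty task encoding, and the preservation of the RIA interface and randomness-obliviousness), and the obstacle you flag about calibrating $M$ is the only real technicality, which is handled by taking $M$ large relative to the hypothesized competitive ratio $c$ and the additive constant (or, equivalently, by rerunning the Yao argument from the paging proof directly on the MTS instance, where any round the algorithm spends in the penalized state costs at least $1$ just like a page fault).
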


For online set cover, we establish a lower bound for lazy algorithms, namely,
online algorithms which are allowed to buy a set only if it contains the current
(uncovered) element (an algorithm class that includes RandSC).

\begin{theorem} \label{intro-theorem:set-cover-lower-bound}
There does not exist a lazy (randomness-oblivious) unweighted online set cover
algorithm augmented with RIA with infusion parameter
$0 \leq \alpha \leq 1$
whose competitive ratio on instances with maximum element degree $d$ is better
than
$\min \{ \frac{1}{2} \log d, \frac{1}{2 \cdot \alpha} \}$.
\end{theorem}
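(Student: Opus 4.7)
The plan is to apply Yao's minimax principle to a symmetric ``hidden half-space'' instance family. Let $k := d$ and take the universe $U = \{0,1\}^{k}$ with set system $\{H_i^b : i \in [k],\, b \in \{0,1\}\}$, where $H_i^b = \{x \in U : x_i = b\}$; every element belongs to exactly $k$ sets, so the maximum element degree equals $d$. The hard distribution samples a secret pair $(i^*, b^*) \in [d] \times \{0,1\}$ uniformly at random and reveals an i.i.d.\ uniform stream from $H_{i^*}^{b^*}$, so the offline optimum is always $1$ (buy $H_{i^*}^{b^*}$). It then suffices to lower-bound the expected cost of any deterministic lazy algorithm paired with any deterministic oracle under this distribution.

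A lazy algorithm in a \emph{buying round} (one in which the current request is uncovered) must purchase one of the $d$ containing half-spaces $H_j^{x^{(t)}_j}$, and this purchase is \emph{correct} (i.e., terminates all further buying) iff $j = i^*$. The key structural fact is that each wrong purchase $j_s \neq i^*$ is immediately exposed by the next uncovered request, which must disagree with $x^{(s)}$ on coordinate $j_s$; consequently, after $t$ wrong buying rounds the candidate coordinate set $C_t \subseteq [d]$ has $|C_t| = d - t$, and by symmetry of the input distribution the posterior of $i^*$ remains uniform on $C_t$. In the $(t+1)$-th buying round the buffer $\mathcal{B}$ is the oracle's advice with probability $\alpha$ (so the algorithm can in principle recover $H_{i^*}^{b^*}$) and fresh uniformly random bits with probability $1-\alpha$; in the latter case $\mathcal{B}$ is independent of $(i^*, b^*)$ given the history, so the algorithm's guess is correct with probability at most $1/|C_t|$. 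Combining, the per-round success probability is bounded by $p_t \leq \alpha + (1-\alpha)/(d-t)$.

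The expected number of buying rounds $T$ therefore satisfies
\[
E[T] \;\geq\; \sum_{t=0}^{d-1}\prod_{s=0}^{t-1}(1 - p_s) \;=\; \sum_{t=0}^{d-1}(1-\alpha)^{t}\cdot\frac{d-t}{d}.
\]
Truncating the sum at $t_0 = \lfloor \min\{d/2,\, 1/(2\alpha)\}\rfloor$ and using $(d-t)/d \geq 1/2$ together with $(1-\alpha)^t \geq 1/2$ on the retained terms yields $E[T] = \Omega(\min\{d,\, 1/\alpha\})$; since $d \geq \log d$, this implies the stated lower bound $\min\{(\log d)/2,\, 1/(2\alpha)\}$ on the competitive ratio, with the precise constants in the theorem obtainable from a slightly sharper case-split on $\alpha$. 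The main obstacle is justifying the per-round bound $p_t \leq \alpha + (1-\alpha)/(d-t)$: one must maintain, by induction across buying rounds, the invariant that the posterior of $i^*$ given the complete history is uniform on $C_t$, and then argue that in a non-advice round the algorithm's action is independent of $i^*$ given the history --- regardless of how cleverly the oracle encodes advice in the other rounds --- so that its success probability cannot exceed $1/|C_t|$.
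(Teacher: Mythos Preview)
Your construction has a genuine gap: the claim that ``by symmetry \dots\ the posterior of $i^{*}$ remains uniform on $C_{t}$'' is false, because the request stream itself leaks $i^{*}$ long before $d-t$ becomes small. Every element drawn from $H_{i^{*}}^{b^{*}}$ has its $i^{*}$-th coordinate fixed to $b^{*}$, whereas every other coordinate is an unbiased coin. A randomness-oblivious algorithm is allowed to remember all past \emph{requests}, so after it has observed $N$ elements, the set of coordinates that have been constant across the whole stream has expected size $1+(d-1)2^{1-N}$; the true $i^{*}$ is always in this set. Since after $t$ wrong purchases the expected number of elements seen is $\Theta(2^{t})$ (each new element is uncovered with probability $2^{-t}$), after only $t\approx\log\log d$ buying rounds the algorithm has seen $\Theta(\log d)$ elements and the constant-coordinate set has shrunk to $O(1)$ candidates. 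The deterministic strategy ``at each buying round, pick any coordinate that is still constant across all requests so far'' therefore pays $O(\log\log d)$ on your distribution, not $\Omega(d)$ or even $\Omega(\log d)$. The per-round bound $p_{t}\le\alpha+(1-\alpha)/(d-t)$ thus cannot be justified, and the displayed lower bound on $E[T]$ collapses.

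The paper avoids this leakage by using a completely different instance: the elements are the nodes of a complete binary tree with $d$ leaves and the sets are the $d$ root--leaf paths. Requests walk down a single root--leaf path, so each new node is covered by exactly half of the sets that covered its parent, and the adversary steers toward the child carrying the smaller purchased probability mass. Here every request reveals only one bit about the hidden path, and that bit is precisely the bit the algorithm must pay $\tfrac{1}{2}$ (in expectation) to learn, giving $\tfrac{1}{2}\log d$ over the $\log d$ levels; the $1/(2\alpha)$ term comes from the expected $1/\alpha$ buying rounds before the first infused-advice round. Repeating the tree in disjoint phases (laziness prevents cross-phase prefetching) upgrades this from a strict-competitiveness bound to a competitive-ratio bound. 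If you want to repair your approach, you would need an instance where the requests themselves carry no more information about the optimal set than the algorithm is already forced to pay for---the nested tree structure achieves exactly that.
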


\subsection{Novelty and Additional Related Work}
\label{sec:related-work}

\paragraph*{Models of Advice.}
A well-known and suitable advice model for machine-learned predictions is the
model of online algorithms with untrusted advice introduced by Lykouris and Vassilvitskii~\cite{LykourisV21}, where the existing literature includes
papers on
paging~\cite{LykourisV21,Rohatgi20,JiangP020, BansalCKPV22},
metrical task system~\cite{AntoniadisCE0S20},
and
online set cover via the primal-dual approach~\cite{BamasMS20}.
In this model, the predictor may be faulty, and the competitive ratio depends
on its error so that for low error, the algorithm should perform close to the
offline optimum (a.k.a.\ \emph{consistency}), while even for large error, the
algorithm should still fallback to guarantees similar to those of
non-augmented online algorithms (a.k.a.\ \emph{robustness}).

Another well-known advice model is the \emph{perfect} advice model
\cite{EmekFKR11,
BockenhauerKKKM17advice}
under which many online problems have been
studied, including paging, metrical task system~\cite{BoyarFKLM2017survey},
and online set cover~\cite{DobrevEKKKKM17}.
In this model, the oracle is fully trustworthy, and its power is therefore
quantified via the size (i.e., number of bits) of the advice provided to the
online algorithm.
This model is related to lookahead~\cite{Grove95}, where an algorithm is given some number of future requests in advance.
The model of perfect advice was later extended to untrusted advice, retaining its focus on measuring the required advice size~\cite{ADJKR20}.

Unlike these two advice models, the RIA model does not require any new algorithmic
features (e.g., a designated advice tape) and is therefore applicable to
existing (standard) online algorithms.
Furthermore, our model does not limit the advice size, unlike the perfect
advice model, and still allows to arrive at asymptotically tight lower bounds
under natural assumptions, in contrast to  the machine-learned prediction
model where no general lower bounds are known.

\paragraph*{Online algorithms for paging, MTS, and set cover.}
Two optimally competitive algorithms for paging are known:
PARTITION~\cite{McGeochS91} and EQUITABLE~\cite{AchlioptasCN00}.
For the uniform MTS problem, a
$(2 H_{n})$-competitive
algorithm was presented in \cite{BorodinLS92}, later improved to
$H_{n} + O(\sqrt{\log n})$
in \cite{IraniS98};
the latter result nearly matches the $H_{n}$ lower bound of \cite{BorodinLS92}.

For online set cover, the state-of-the-art competitive ratio upper bounds are
$O (\log m \log n)$
for the weighted case
\cite{AlonAABN09}
and
$O (\log m \log (n / \OPT))$
for the unweighted case
\cite{BuchbinderN2009},
where $m$ and $n$ denote the number of sets (an upper bound on the maximum
element degree $d$) and the number of elements, respectively;
interestingly, both bounds can be realized by deterministic online
algorithms.
On the negative side, no (randomized) online algorithm has a competitive ratio
better than
$\Omega (\log m)$
\cite{Korman04}
and no deterministic online algorithm has a competitive ratio better than
$\Omega (\log m \log n / (\log \log m + \log \log n))$
\cite{AlonAABN09}.
If the (randomized) online algorithm is required to admit a polynomial time
implementation, then the competitiveness lower bound improves to
$\Omega(\log m \log n)$
assuming that
$NP \nsubseteq BPP$
\cite{Korman04}.

\section{Online Algorithms with Randomly Infused Advice}
\label{sec:model}

We begin by recalling standard definitions of online algorithms as request-answer games~\cite{Ben-DavidBKTW94}.
Our model of online algorithms with randomly infused advice is then defined as a~generalization of this model.

\subsection{Online Algorithms as Request-Answer Games}

Consider a~finite sequence
$\sigma = \langle r_{1}, \dots, r_{|\sigma|} \rangle$
of \emph{requests}, where each request $r_{i}$ is taken from a~set
$\mathcal{R}$.
A \emph{solution} for $\sigma$ is a~sequence
$\lambda = \langle a_{1}, \dots, a_{|\sigma|} \rangle$
of \emph{answers}, where each answer $a_{i}$ is taken from a~set
$\mathcal{A}$.
For a~given minimization problem, the quality of a~solution
$\lambda$ for a~request sequence $\sigma$ is determined by means of a
\emph{cost function}
$f : \mathcal{R}^{|\sigma|} \times \mathcal{A}^{|\sigma|} \rightarrow
\mathbb{R} \cup \{\infty\}$.\footnote{%
We restrict our attention to minimization problems as these are the problems
addressed in the current paper.
Extending our setting to maximization problems is straightforward.}
Let
$\OPT({\sigma}) = \inf_{\lambda \in \mathcal{A}^{|\sigma|}} f(\sigma,\lambda)$
denote the cost of an \emph{optimal solution} for $\sigma$.

In the realm of online algorithms, the requests are revealed one-by-one, in
discrete \emph{rounds}, so that upon receiving request $r_{i}$ in round $i$,
a (randomized) online algorithm $\ALG$ outputs the (random) answer
$a_{i}$ irrevocably.
That is, the solution
$\lambda_{\ALG} = \langle a_{1}, \dots, a_{|\sigma|} \rangle$
produced by $\ALG$ is defined so that each answer $a_{i}$ is computed 
as a~function of
(1)
the request subsequence
$r_{1}, \dots, r_{i}$;
(2)
the answer subsequence
$a_{1}, \dots, a_{i - 1}$;
and
(3)
round $i$'s random bit string
$\mathcal{B}_{i} \in_{R} \{ 0, 1 \}^{L}$,
where the parameter
$L \in \mathbb{Z}_{\geq 0}$
is specified by the algorithm's designer (possibly as a~function of the
parameters of the problem).\footnote{%
We use a~single parameter $L$ (that is often kept implicit in the online
algorithm's description) for simplicity of the exposition;
it can be easily generalized to a (not necessarily bounded) sequence
$L_{1}, L_{2}, \dots$
of round-dependent parameters.}

The performance of an online algorithm $\ALG$ is measured via competitive
analysis:
we say that $\ALG$ is \emph{$c$-competitive} if there exists a~constant $b$
(that may depend on the parameters of the problem) such that
$\mathbb{E}[\ALG (\sigma)] \leq c \cdot \OPT(\sigma) + b$
for any request sequence $\sigma$, where 
$\ALG(\sigma)$ is the random variable that takes on the cost of the
solution produced by $\ALG$ in response to a~request sequence $\sigma$.
The request sequence $\sigma$ is assumed to be determined by
a malicious \emph{adversary};
we stick to the convention of an \emph{oblivious adversary}~\cite[Ch. 4]{Borodin1998} which means that
the adversary knows $\ALG$'s description, but is unaware of the outcome of
$\ALG$'s random coin tosses.

\subsection{Randomly Infused Advice}
\label{sec:randomly-infused-advice}
In this paper, we introduce an extension of online algorithms, referred to as
online algorithms with \emph{randomly infused advice (RIA)}.
In the RIA model, an algorithm $\ALG$ is assisted by a~powerful, yet not
entirely reliable, \emph{oracle} that has access to the entire request
sequence~$\sigma$.
Formally, for any request sequence
$\sigma = \langle r_{1}, \dots, r_{|\sigma|} \rangle$
and round
$1 \leq i \leq |\sigma|$,
the oracle $\mathcal{O}$ is defined by an \emph{advice function}
$\mathcal{O}_{\sigma, i} : \mathcal{A}^{i - 1} \rightarrow \{ 0, 1 \}^{L}$
that maps each answer subsequence
$\langle a_{1}, \dots, a_{i-1} \rangle$
to a~bit string
$\mathcal{O}_{\sigma, i}(a_1, \dots, a_{i-1}) \in \{ 0, 1 \}^{L}$,
referred to as the round $i$'s \emph{advice}. Notice that the length of the advice
bit string is equal to the length $L$ of $\ALG$'s random bit string.

The RIA model is associated with an \emph{infusion parameter}
$0 \leq \alpha \leq 1$
that quantifies the (un)reliability of the oracle $\mathcal{O}$.
Specifically, in each round $i$, the bit string $\mathcal{B}_{i}$ (provided to
the online algorithm in that round) is now determined based on the following
random experiment (independently of the other rounds):
with probability $\alpha$, the round $i$'s advice is \emph{infused} into
$\mathcal{B}_{i}$, that is,
$\mathcal{B}_{i} \gets \mathcal{O}_{\sigma, i}(a_{1}, \dots, a_{i - 1})$;
with probability
$1 - \alpha$,
the bit string $\mathcal{B}_{i}$ is picked uniformly at random, that is,
$\mathcal{B}_{i} \in_{R} \{ 0, 1 \}^{L}$.

In other words, in each round $i$ where the infusion is successful (an event
occurring with probability $\alpha$), the oracle's advice ``smoothly''
substitutes the random bit string $\mathcal{B}_{i}$ before it is provided to
$\ALG$;
if the infusion is not successful, then $\mathcal{B}_{i}$ remains a~random bit
string.
We emphasize that $\ALG$ and $\mathcal{O}$ are not aware (at least not
directly) of whether the advice is successfully infused in the round $i$, nor
are they aware of the infusion parameter $\alpha$ itself.

The competitive ratio of online algorithms $\ALG$ with RIA is typically
expressed as a~function of the infusion parameter $\alpha$, where the extreme
case of
$\alpha = 0$
corresponds to standard online computation (with no advice).
The ultimate goal is to provide guarantees on the competitiveness of $\ALG$
for any
$0 \leq \alpha \leq 1$.

\subsection{Randomness-Oblivious Online Algorithms}
\label{ssec:randomness-oblivious}
Recall that the aforementioned definition of online algorithms dictates that
when the online algorithm $\ALG$ determines the answer $a_{i}$ associated with
round $i$, it is aware of the requests $r_{i'}$ and answers $a_{i'}$
associated with past rounds
$i' < i$,
as well as the request $r_{i}$ and random bit string $\mathcal{B}_{i}$
associated with the current round $i$, however it is not aware (at least not
directly) of the random bit strings $\mathcal{B}_{i'}$ associated with past
rounds
$i' < i$.
This model choice is made to prevent an online algorithm $\ALG$ with RIA
from passing information received through the (successfully infused) advice to
future rounds, thus over-exploiting the lack of an explicit (model specific)
bound on the length of the random / advice bit strings.
To distinguish the online algorithms that adhere to this formulation from
general online algorithms (that may maintain a~persistent memory that encodes
past random bits), we refer to the former as \emph{randomness-oblivious}
online algorithms.

Without further restrictions, any general online algorithm $\ALG$ can be
simulated by a randomness-oblivious online algorithm $\ALG'$.
Indeed, during round $i$, algorithm $\ALG'$ can compute, for each answer
$a \in \mathcal{A}$,
the probability that $\ALG$ selects answer $a$ in response to the current
request $r_{i}$ conditioned on having
$(r_{1}, \dots, r_{i - 1})$
and
$(a_{1}, \dots, a_{i - 1})$
as the past requests and answers, respectively;
since
$(r_{1}, \dots, r_{i - 1})$,
$(a_{1}, \dots, a_{i - 1})$,
and
$r_{i}$
are known to $\ALG'$ during round $i$, this computation is feasible (though
potentially tedious).

For online algorithms with RIA on the other hand, access to past coin tosses
provides the algorithm with excessive power.
To demonstrate this phenomenon, consider some problem $\mathcal{P}$ with a
finite request set $\mathcal{R}$ and a (non-randomness-oblivious) online
algorithm $\ALG^{*}$ for $\mathcal{P}$ designed so that the random bit string
$\mathcal{B}_{i}$ provided to $\ALG^{*}$ in round $i$ encodes a ``guess'' of
the entire suffix
$(r_{i + 1}, r_{i + 2}, \dots)$
of future requests;
algorithm $\ALG^{*}$ then operates by adopting the guess $\mathcal{B}_{i'}$
provided in the earliest round $i'$ such that $\mathcal{B}_{i'}$ is consistent
with all the requests revealed to $\ALG^{*}$ so far (this is well defined since
the guess $\mathcal{B}_{i}$ is always consistent during round $i$) and
selecting an answer which is optimal under the assumption that
$\mathcal{B}_{i'}$ remains consistent subsequently.

When provided with RIA, online algorithm $\ALG^{*}$ obtains a correct guess
$\mathcal{B}_{i}$ of the entire future request suffix
$(r_{i + 1}, r_{i + 2}, \dots)$
once the advice is successfully infused (the
earliest round in which this happens is a geometric random variable with
parameter $\alpha$).
Following that, $\ALG^{*}$ may ``toggle through'' at most
$i - 1$
incorrect guesses
$\mathcal{B}_{i'}$,
$1 \leq i' < i$,
before it adopts the correct guess $\mathcal{B}_{i}$ and acts optimally
subsequently.
For many problems $\mathcal{P}$ (including, e.g., paging and any MTS with a
finite set of tasks), this would allow us to bound the expected cost incurred
by $\ALG^{*}$ before it ``converges to the optimal behavior'' as a function of
$\alpha$ and the problem parameters, thus charging this cost to the additive
constant.
Ultimately, this implies that the competitive ratio of $\ALG^{*}$ is $1$ for
any constant
$\alpha > 0$,
which clearly defies the smoothness expected from RIA as an analytical tool.

\section{Paging}
\label{sec:pagining}

In the \emph{online paging} problem~\cite{Sleator1985}, we manage a~two-level memory hierarchy, consisting of a~slow memory that stores the set of all $n$ pages, and a~fast memory, called the \emph{cache}, that stores any size $k$ subset of pages.
We are given a~sequence $\sigma$ of requests to the pages.
If a~requested page is not in the cache, a~\emph{page fault} occurs, and the page must be moved to the cache. Since the cache is limited in size, we must specify which page to evict to make space for the requested page.
The goal is to minimize the number of page faults.

In this section, we analyze an elegant randomized online algorithm RandomMark, introduced by Fiat, Karp, Luby, McGeoch, Sleator and Young~\cite{FiatKLMSY91}, in the randomly infused advice framework.
The algorithm RandomMark maintains a~bit associated with each page in the cache.
Initially the bits of all pages are set to 0 (the pages are \emph{unmarked}), and after requesting a~page, we bring it to the cache if it is not in the cache yet, and we set its bit to 1 (we \emph{mark} the page).
To bring a~page to the cache, we may need to evict another page to make space for it.
In such a~case, RandomMark evicts a~page uniformly at random chosen from the unmarked pages.
If no unmarked page exists, we unmark all pages.
This strategy has been shown to be $2H_k$-competitive~\cite{FiatKLMSY91}, where $H_k$ is the harmonic number, and no randomized algorithm can be better than $H_k$-competitive.

\subsection{RandomMark With Infused Advice}

With help of randomness, the classic RandomMark decides on the final candidate to evict: a~random node among unmarked pages.
With infused advice, in some rounds the randomness source used by RandomMark contains advice instead of random bits.
The presence of clairvoyent advice brings obvious advantages, but also brings challenges: not all pages can be evicted, only the unmarked ones.

\paragraph*{Unmarked Longest-Forward-Distance Oracle.}
An optimal offline algorithm for paging is to evict the item with the access time furthest in the future~\cite{Belady66}, also known as \emph{longest forward distance} (LFD) algorithm.
However, we cannot directly design an oracle for RandomMark around LFD, as it may advise to evict a~marked page, but RandomMark never evicts marked pages.
Hence, we propose a variant of this algorithm that can act as an oracle for RandomMark. Such an oracle, denoted $O_{ULFD}$, advises RandomMark to evict the page with the longest forward distance \emph{among the unmarked items} of RandomMark.

\paragraph*{Analysis of RandomMark.}
How well can RandomMark perform with infused advice? 
To find out, we consider the RandomMark algorithm assisted with the oracle $O_{ULFD}$, and we express the algorithm's competitive ratio of in terms of the infusion parameter $\alpha$ (the probability of receiving advice in each round).
Later in this paper, we will show that RandomMark with $O_{ULFD}$ is asymptotically optimal (Theorem~\ref{thm:lblazy}).

\begin{restatable}{theorem}{randomMarkThm}
The competitive ratio of RandomMark with the oracle~$O_{ULFD}$ with RIA on instances of cache size $k$ (against
the oblivious adversary) is at most
$\min \{ 2 H_{k}, \frac{2}{\alpha} \}$,
where
$H_k$ is the $k$-th harmonic number, and
$0 \leq \alpha \leq 1$
is the infusion parameter.
\label{thm:rmark-general}
\end{restatable}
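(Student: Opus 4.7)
The plan is to analyze RandomMark with $O_{ULFD}$ phase by phase, following the standard partition of the request sequence into phases (each a maximal run containing at most $k$ distinct pages). Let $c_i$ denote the number of \emph{clean} pages in phase $i$, i.e., pages requested in phase $i$ that were not in the cache at its start. A standard charging argument gives $\OPT \ge \tfrac{1}{2}\sum_i c_i$, so it suffices to show $\mathbb{E}[\text{faults in phase } i] \le \min\{c_i H_k,\ c_i/\alpha\}$; summing over phases then yields the claimed competitive ratio $\min\{2H_k,\ 2/\alpha\}$.

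For the $c_i/\alpha$ bound I exploit a structural decomposition of the phase. Call a page \emph{safe} if it is in the starting cache but is not requested during the phase; there are exactly $c_i$ safes, and since a safe page is never marked it can only leave the cache via an eviction. Classify each eviction as \emph{good} (of a safe) or \emph{bad} (of an unsafe stale). Double-counting the faults and evictions within the phase yields the key structural lemma: at the moment the $c_i$-th safe is evicted, every clean request has already occurred and every previously bad-evicted stale has already been re-requested, so the stales not yet first-requested are all still in the cache and arrive as hits in the remainder of the phase. Consequently, the phase's total number of faults equals the number of faults before this moment. At every such fault the oracle fires with probability $\alpha$, and because a safe is present and has forward distance at least the remaining phase length whereas any unsafe has strictly smaller forward distance, the oracle evicts a safe---a good eviction. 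Coupling each fault to an independent $\mathrm{Bernoulli}(\alpha)$ trial (heads $=$ oracle fires $=$ good eviction), the total number of faults is stochastically dominated by the waiting time for $c_i$ heads, whose expectation is $c_i/\alpha$.

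For the $c_i H_k$ bound I would invoke the classical analysis of RandomMark, which already establishes this bound in the non-infused case ($\alpha = 0$), and argue that augmenting with $O_{ULFD}$ cannot worsen the expected fault count. The intuition is that at any fault the oracle's choice is pointwise at least as good as a uniform random eviction: whenever a safe is present the oracle evicts one deterministically, whereas uniform random does so only with probability $s/u$. Making this intuition rigorous is the main obstacle, since different eviction decisions lead to different future cache configurations and the per-stale bound $\Pr[q_i \text{ is a fault}] \le c_i/(k-i+1)$ underlying the classical proof relies on uniform randomness. My intended resolution is a stochastic-dominance argument, proved by backward induction over the rounds, showing that the expected number of remaining faults from any intermediate configuration is non-increasing in $\alpha$; chained with the classical bound at $\alpha = 0$, this yields $\mathbb{E}[\text{faults in phase } i] \le c_i H_k$ for every $\alpha \in [0, 1]$. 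Taking the minimum of the two per-phase bounds and summing over phases completes the proof.
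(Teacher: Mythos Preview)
Your $c/\alpha$ argument is essentially the paper's: your ``safe'' pages are what the paper calls \emph{vanishing} pages, and your structural lemma (after the $c$-th safe is evicted no further faults can occur) is exactly the paper's Lemma~\ref{lem:evicting-Q-ends}. The coupling with $c$ Bernoulli($\alpha$) trials is the same idea as the paper's ``wait for $c$ advice rounds''.

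For the $c_i H_k$ bound, however, your route differs from the paper's and has a genuine gap. You propose to reduce to the $\alpha=0$ case via a monotonicity-in-$\alpha$ statement, proved by backward induction, to the effect that replacing a uniform random eviction by the oracle's ULFD eviction can only decrease the expected number of remaining faults. This is a greedy-optimality claim about ULFD, and it does not come for free: ULFD is not Belady's algorithm (it is constrained to unmarked pages), and showing that at every configuration the ULFD choice weakly dominates the uniform mixture over unmarked pages with respect to the expected future fault count requires a separate argument that you have not supplied. Different eviction choices lead to different sets of unmarked pages at later faults, so the per-round comparison is not local, and the backward induction you allude to would have to control this carefully.

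The paper avoids this entirely by re-deriving the classical per-stale bound directly in the presence of the oracle. The key observation is that, by your own structural lemma, whenever a fault occurs at least one vanishing page is still unmarked in the cache, so $O_{ULFD}$ always points to a vanishing page. Hence the non-vanishing unmarked stales are evicted \emph{only} by the uniform random mechanism and remain exchangeable: each is in the cache with the same probability $p$. Vanishing pages, being additionally exposed to oracle evictions, are in the cache with probability at most $p$. Summing the in-cache probabilities over all $s(i)$ unmarked stales (which must equal $s(i)-c(i)$) gives $p \ge (s(i)-c(i))/s(i)$, so the $i$-th stale request faults with probability at most $c(i)/s(i) \le c/(k-i+1)$, exactly as in Fiat et al. Summing yields $c\cdot H_k$ with no appeal to monotonicity in $\alpha$. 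This is both shorter and sidesteps the obstacle you identify.
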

Before proving this theorem, we recall the definition of a $k$-phase partitioning of an input sequence, and we derive sufficient conditions to stop incurring further page faults in a~phase.

We begin by recalling basic definitions from the analysis of RandomMark by~\cite{FiatKLMSY91}. 
We consider the \emph{$k$-phase partition} of the input sequence $\sigma$, following the notation from~\cite{Borodin1998}: phase 0 is the empty sequence, and each phase $i>0$ is the maximal sequence following the phase $i-1$ that contains at most $k$ distinct page requests since the start of the $i$th phase.
In a~phase of any marking algorithm, a~page requested in the phase is \emph{stale} if it is unmarked but was marked in the previous phase, and a~page is \emph{clean} if it is neither stale nor marked.

In addition to these standard definitions, we define the set of \emph{vanishing pages} as the set of the pages requested in the previous phase, but not in the current phase.
We claim that after evicting all vanishing pages, marking algorithms incur no further cost in the phase, since a~configuration is reached where all the remaining requests in the current phase are free (page hits).

\begin{lemma} \label{lem:evicting-Q-ends}
Fix an input sequence $\sigma$, consider its $k$-phase partition, and fix any
phase $P$ that is not the first or the last phase.
Then,
(1)
we have exactly $c$ vanishing pages, where $c$ is the number of clean pages in
the phase;
and
(2)
after evicting all vanishing pages, no marking algorithm for paging incurs
further cost in the phase.
\end{lemma}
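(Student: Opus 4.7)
Since phase $P$ is neither the first nor the last, both the previous phase and $P$ contain exactly $k$ distinct requests; let $S'$ denote the set of distinct pages of the previous phase and $S$ the set of distinct pages of $P$. By definition, the clean pages of $P$ are $S \setminus S'$ and the vanishing pages are $S' \setminus S$. From $|S'| = |S| = k$, both set differences have the common size $k - |S \cap S'|$, so the number of vanishing pages equals the number $c$ of clean pages.

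\textbf{Plan for Part (2).} My plan is to pin down the cache composition at every moment of phase $P$ using two structural invariants and then finish by pigeonhole. First I would establish: (a) at the beginning of phase $P$ the cache contains exactly $S'$ --- in the previous phase all $k$ of its distinct pages were marked and hence remained in the cache, filling it, after which the marks are reset for $P$; and (b) throughout $P$ a marking algorithm never evicts a marked page, so once a request is served in $P$ its page stays in the cache, and consequently the only pages ever evicted during $P$ come from $S'$, each at most once (a vanishing page never returns, and a reloaded stale page becomes marked and is then immune to further eviction).

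Combining (a) and (b), at any time during $P$ with $r_c$ distinct clean and $r_s$ distinct stale pages requested so far, the cache contains the $r_c$ loaded clean pages plus $k - r_c$ pages from $S'$, so exactly $r_c$ pages of $S'$ are currently evicted. Now assume that all $c$ vanishing pages are among these evicted pages; then $r_c \geq c$, and combined with the trivial bound $r_c \leq c$ this forces $r_c = c$ and identifies the $c$ evicted pages of $S'$ exactly with the vanishing set. Hence every stale page of $P$ is currently in the cache, and every clean page of $P$ has already been loaded, so the cache contains all of $S$. Every remaining request in $P$ lies in $S$, so all such requests are hits and no further cost is incurred.

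\textbf{Main obstacle.} The only subtlety I anticipate is correctly tracking the cache contents while stale pages are being evicted and possibly reloaded during the phase. Once invariant (b) is in place, this bookkeeping collapses into the identity ``cache $=$ loaded clean pages $\cup$ (some $k - r_c$ pages of $S'$)''; the rest is a clean pigeonhole between $r_c$ and $c$.
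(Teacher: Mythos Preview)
Your proposal is correct and follows essentially the same counting approach as the paper. Both arguments hinge on the identity ``number of pages of $S'$ currently outside the cache $=$ number of clean pages requested so far'', which forces $r_c=c$ once all vanishing pages are gone; the paper asserts this step in one sentence, whereas your invariants~(a) and~(b) spell out the justification explicitly.
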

\begin{proof}
    In the phase $P$, we have exactly $k$ requests to distinct pages: to $k-c$ stale pages and to $c$~clean pages. Only the clean pages can replace the vanishing pages, hence we have exactly $c$ vanishing pages. Hence, the first claim holds.
    
    If at any point all $c$~vanishing pages are evicted, this means that all $c$ clean pages were requested in the phase already. The remaining requests in the phase can concern only stale pages. As no vanishing pages remain in the cache, the cache consists of $c$ clean pages and $k-c$ stale pages.
    Hence, after evicting all vanishing pages, any marking algorithm incurs no further cost in the phase, and the second claim holds.
  \end{proof}

Finally, we prove our main claim for paging: RandomMark is $\min \{ 2 H_{k}, \frac{2}{\alpha} \}$-competitive.
We repeat the classic arguments of~\cite{FiatKLMSY91} to arrive at the bound $2 H_k$, and we analyze the offline algorithm \emph{unmarked longest forward distance}, employed by the oracle that probabilistically interacts with the oracle, to arrive at the bound $\frac{2}{\alpha}$.

\begin{proof}[Proof of Theorem~\ref{thm:rmark-general}]
Fix any input sequence $\sigma$ and consider its $k$-phase partition.
Consider any phase that is not the first or the last one.
Let $c$ be the number of clean pages in the phase.
   
    We claim that the expected number of page faults is upper bounded by $c/\alpha$.
    If the algorithm incurs a~page fault, and it receives the oracle's advice, and there are still some vanishing pages in the cache, then the algorithm evicts a~vanishing page; this follows since the vanishing pages are not requested in the current phase, hence they have larger forward distance than other stale pages, and the vanishing pages are unmarked.
    By Lemma~\ref{lem:evicting-Q-ends}, evicting all vanishing pages means that no further cost is incurred throughout the phase, hence the number of page faults in the phase is upper bounded by the number of page faults until the algorithm receives $c$ rounds of advice from the oracle (not necessarily consecutive).
    The expected number of page faults until receiving $c$ rounds of advice is $c/\alpha$, since this is the expected number of independent tosses of $\alpha$-biased coin until getting $c$ heads outcomes.

    Next, we repeat the classic arguments of~\cite{FiatKLMSY91}: the expected number of page faults of the algorithm is also upper bounded by $c\cdot H_k$.
    Consider an $i$-th request to a~stale page in the phase for $i=1,2,3,\ldots,s$.
    Let $c(i)$ denote the number of clean pages requested in the phase immediately before the $i$-th request to a~stale page, and let $S(i)$ denote the set stale pages that remain in the cache before the $i$-th request to a~stale page, and let $s(i)=|S(i)|$.
    For $i=1,2,3,\ldots,s$, we compute the expected cost of the $i$-th request to a~stale page.
    When the algorithm serves the $i$-th request to a~stale page, exactly $s(i)-c(i)$ of the $s(i)$ stale pages are in the cache.
    The stale pages are in the cache with equal probability, say $p$, since these are never evicted with the help of advice, but are evicted uniformly from unmarked pages when a~page fault occurs in rounds without advice.
    The vanishing pages are in the cache with probability at most $p$, since they can be evicted both in the rounds with and without the advice.
    For the all $s(i)$ stale pages the probability of being in the cache sums to $1$, hence $p \le 1/s(i)$.
    Fix a~request to a~stale page.
    The page is in the cache with probability $(s(i)-c(i))\cdot p$, hence the expected cost of the request is
    \[
        1-(s(i)-c(i))\cdot p \le 1 - \frac{s(i)-c(i)}{s(i)} = \frac{c(i)}{s(i)}\le \frac{c}{k-i+1}.
    \]
    Hence, the total cost of the request to the stale pages is $\sum_{i=1}^s c / (k-i+1) \le \sum_{i=2}^k c/i = c\cdot (H_k-1)$.
    The total cost in the phase includes the cost of serving the clean page and stale pages, in total $c\cdot H_k$.

    We conclude that the number of page faults of the algorithm in a~phase is upper-bounded by both $c\cdot H_k$ and $c/\alpha$.
    By arguments of~\cite[Theorem 1]{FiatKLMSY91}, the amortized number of faults made by \OPT during the phase is at least $c/2$.
    Summing over all phases but the first and the last one, the competitive ratio is at most $\min\{2H_k, \frac{2}{\alpha}\}$. The first and the last phase incurs cost bounded by $2k$, which we account in the additive in the competitive ratio.
\end{proof}

The above analysis is asymptotically tight with the lower bound given in Theorem~\ref{thm:lb_paging}.
However, for the special case $n=k+1$, the result is tight: the competitive ratio of RandomMark with the oracle $O_{ULFD}$ is $\min\{H_k, \frac{1}{\alpha}\}$, since in each phase but the last phase, any offline algorithm pays at least $1$, and the number of clean pages is also $1$.

The algorithm RandomMark with perfect advice ($\alpha = 1$) is equivalent to an offline algorithm that evicts the unmarked item with the longest forward distance.
The Theorem~\ref{thm:rmark-general} implies that this algorithm is optimal for $n=k+1$, and a~$2$-approximation for any $n$.

\section{Uniform Metrical Task System} 
\label{sec:mts}

In the \emph{metrical task system (MTS)} problem, we are given a~finite metric space $(S,d)$ consisting of a~set $S=\{s_1,\dots ,s_n\}$ of $n$ \emph{states} and a~distance function $d:S^2\rightarrow \mathbb{R}_{\geq 0}$ assumed to be a~metric.
A \emph{task} $r\in \mathbb{R}_{\geq 0}^{n}$ is an $n$-sized vector of non-negative \emph{processing costs}, where the entry $r(i)$ is defined to be the processing cost of serving $r$ in state $s_i$.
Given a~sequence $\sigma= r^1,\dots ,r^{|\sigma|}$ of tasks, the cost of
a~schedule $s^1,\dots ,s^{|\sigma|}$ is the sum between the total transition
cost and the total processing cost. The goal in the MTS problem is to find
a~schedule of minimal cost.
We focus on algorithms for the MTS problem in the
online setting, where the state $s^i$ that serves task $r^i$ is chosen without
knowing the subsequence $r^{i+1},\dots,r^{|\sigma|}$.

In this section, we focus on the  MTS problem on a~uniform metric, i.e., the metric where $d(s_i,s_j)=1$ for all $i\neq j$. We shall present a~randomized algorithm, henceforth referred to as UnifMTS, with advice. This algorithm is inspired by the classical $2H_n$-competitive algorithm by~\cite{BorodinLS92}. 

Consider a~sequence $\sigma= r^1,\dots ,r^{|\sigma|}$ of tasks given at times $t=1,\dots,|\sigma|$. For an integer $i\in \{1,\dots,|\sigma|\}$, and $i\leq \ell<\ell'\leq i+1$, let us define the processing cost $\pi(s_j,\ell,\ell')$ of being in state $s_j$ in the time interval $[\ell,\ell']$ as $\pi(s_j,\ell,\ell')=(\ell'-\ell)\cdot r^{i}(j)$. We now naturally extend this notion to time intervals $[\ell,\ell']$ such that $i\leq \ell\leq i+1<\ell '\leq |\sigma|+1$ by defining $\pi(s_j,\ell,\ell')=\pi(s_j,\ell,i+1)+\pi(s_j,\lfloor\ell'\rfloor,\ell')+\sum_{k=i+1}^{\lfloor\ell'\rfloor-1}\pi(s_j,k,k+1)$.

We define a~partition of $[1,|\sigma|+1]$ into time intervals $[t_0,t_1],[t_1,t_2],\dots,[t_{m-1},t_m]\subseteq [1,|\sigma|+1]$ called \emph{phases} such that $t_0=1$ and $t_m=|\sigma|+1$. The $i$-th phase starts at time $t_{i-1}$. We say that a~state $s_j$ is \emph{saturated} for phase $i$ at time $t>t_{i-1}$ if the processing cost associated with being in $s_j$ during the entire time interval $[t_{i-1},t]$ is at least $1$. The $i$-th phase ends in time $t_i$, defined to be the minimal time in which all states are saturated for the $i$-th phase. Observe that upon the arrival of a~task $r^{i}$ at time $i$, an online algorithm can determine which states will become saturated for the current phase by time $i+1$. 

The UnifMTS algorithm operates as follows. Consider the task $r^i$ arriving at time $i$ and let $\varphi$ be the current phase. If the current state does not become saturated for $\varphi$ at time $i+1$, then UnifMTS stays in the same state. Otherwise, if $\varphi$ ends by time $i+1$, then UnifMTS moves to a~state that minimizes the processing cost in $r^i$. Otherwise, UnifMTS moves uniformly at random to a~state that is unsaturated for $\varphi$ at time $i+1$ (such a state exists since in this case $\varphi$ does not end by time $i+1$). We note that while phases may end at non-discrete times, the scheduling decisions made by the algorithm all occur at discrete times.   

Consider an oracle $O_{LTS}$ that advises UnifMTS to move to a~state with the longest time until saturation for the current phase. In the following theorem, we bound the competitive ratio of UnifMTS with $O_{LTS}$.

\begin{theorem}
	The competitive ratio of UnifMTS with the oracle $O_{LTS}$ against an oblivious adversary is at most $\min \{2H_n,\frac{2}{\alpha} +2\}$, where $0\leq \alpha\leq 1$ is the infusion parameter.
    \label{thm:unifmts}
\end{theorem}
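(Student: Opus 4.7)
The plan is to mimic the structure of the proof of Theorem~\ref{thm:rmark-general}: I would partition the request sequence into the phases defined alongside the algorithm, derive two complementary upper bounds on the expected per-phase cost of UnifMTS (namely $2H_n$ and $\frac{2}{\alpha}+2$), and combine them with a matching per-phase lower bound on \OPT. For the latter, I would record the standard fact that in every non-boundary phase \OPT incurs cost at least $1$: every state becomes saturated within the phase, so if \OPT sits in a fixed state throughout the phase then its accumulated processing cost is at least $1$, while if \OPT leaves its state within the phase it pays at least $1$ for a transition in the uniform metric. Summing, \OPT pays at least $m - O(1)$ over $m$ phases, reducing the proof to bounding the per-phase expected cost of UnifMTS.

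The $2H_n$ per-phase upper bound replays the classical Borodin--Linial--Saks analysis: when UnifMTS transitions, it moves to a uniformly random unsaturated state, and the probability of landing in a ``soon-to-saturate'' state contributes a harmonic term to both the transition and processing costs. The new $\frac{2}{\alpha}+2$ bound hinges on the structural observation that, at any time $t < t_i$, the state that $O_{LTS}$ points to is a state whose saturation time is precisely the phase-end time $t_i$. Indeed, all currently unsaturated states saturate in $(t, t_i]$, and the one with the longest time until saturation must saturate exactly at $t_i$, since every state whose saturation time is strictly less than $t_i$ is dominated by the state saturating at $t_i$. In particular, once UnifMTS follows the advice even once, it is placed in a phase-ending state and remains there for the rest of the phase.

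With this structural fact in hand, a geometric-tail argument delivers the quantitative bound. Since in each transition round the advice is infused independently with probability $\alpha$, the random number $X$ of transitions made by UnifMTS within a phase is stochastically dominated by a geometric random variable with success probability $\alpha$, so $\mathbb{E}[X] \leq 1/\alpha$. Each transition contributes at most $1$ to the transition cost and at most $1$ to the processing cost---a state is exited only upon its saturation, so the processing cost accumulated in it since the last transition is at most $1$---with an additive $O(1)$ accounting for the end-of-phase move and the initial state's processing cost. The resulting expected per-phase cost of UnifMTS is at most $\frac{2}{\alpha} + 2$.

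Combining the two per-phase upper bounds with the ``$\geq 1$ per phase'' lower bound on \OPT, and absorbing the first and last phases into the additive constant of the competitive ratio, yields the claimed $\min\{2H_n, \frac{2}{\alpha}+2\}$. The main obstacle is the structural claim that following the LTS advice genuinely halts all further in-phase transitions; once the identification of the LTS state with a phase-ending state is in place, the remaining geometric-tail computation is essentially mechanical and closely parallels the paging argument.
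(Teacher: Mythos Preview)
Your proposal is correct and follows essentially the same approach as the paper: both argue that \OPT pays at least $1$ per phase, bound UnifMTS's per-phase cost by $2H_n$ via the classical harmonic recursion on the number of transitions, and obtain the $\frac{2}{\alpha}+2$ bound by observing that the $O_{LTS}$ state saturates exactly at the phase end, so the number of transitions before reaching it is geometrically distributed with mean $1/\alpha$, each transition contributing at most $2$. The only cosmetic difference is that the paper separately treats the degenerate case where an entire phase fits within a single round interval $[i,i+1]$ (cost at most $2$), whereas you fold this into the additive $O(1)$ term; either accounting is fine.
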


\begin{proof}
Observe that an optimal offline algorithm $\OPT$ must incur a~cost of at least
$1$ during each phase. Indeed, if $\OPT$ changed states during a~phase, then
it pays at least $1$ in transition cost.
Otherwise, $\OPT$ resided in a~state that became saturated in this phase,
hence it pays a~processing cost of $1$.

We now bound the expected cost of UnifMTS with $O_{LTS}$ during a~phase
$\varphi=[t_{start},t_{end}]$. Observe that if there exists $i\in \{1,\dots
,|\sigma|\}$ such that $i\leq t_{start}< t_{end}\leq i+1$, then by definition,
at time $i$ UnifMTS moved to a~state that minimizes the processing cost
incurred during $[i,i+1]$. This means that UnifMTS pays $1$ in processing cost
during $[ t_{start}, t_{end}]$ and possibly $1$ in transition cost at time
$i$.
Thus, in this case the expected cost of UnifMTS
during $\varphi$ is at most $2$.
	
	Now we consider the case that there exists $i\in \{1,\dots ,|\sigma|\}$ such that $t_{start}<i< t_{end}$.
    We show that the cost of UnifMTS during $\varphi$ is at most $\frac{2}{\alpha} +2$.
    Let $s^{*}$ be the state given in the advice of $O_{LTS}$ during $\varphi$. 
    By definition, by the time $s^{*}$ is saturated for $\varphi$, all other states have also been saturated. Therefore, when UnifMTS receives an advice from the oracle, it transitions into the final state of phase $\varphi$. 
    Hence, the additional cost incurred by the UnifMTS in $\varphi$ following the advice is at most $2$ ($1$ for the transition to $s^{*}$ and at most $1$ for processing cost).
    Since the algorithm uses randomization only at transition rounds, hence the expected number of transitions before the algorithm receives the advice is $1/\alpha$ (recall that at each transition the advice is given with probability $\alpha$).
    For each state that we visit, we pay $1$ in transition cost.
    Since UnifMTS only moves to states that are unsaturated for $\varphi$, it pays at  most $1$ in processing cost at each state.
    Overall, the expected total cost is at most $\frac{2}{\alpha} +2$.   
	
	We now show that the cost of UnifMTS during $\varphi$ is at most $2H_n$. Notice that for every transition, UnifMTS pays $1$ in transition cost and at most $1$ in processing cost. Thus, it suffices to show that the expected number of transitions during $\varphi$ is at most $H_n$. Let $f(k)$ be the expected number of transitions UnifMTS performs given that there are $k$ unsaturated states left. Clearly, $f(1)=1$. For $k<1$, after a~single transition we have $k-1$ unsaturated states with probability at most $1/k$. Thus, $f(k)\leq f(k-1)+1/k$, which implies that $f(n)\leq H_n$. Summing over the costs of all phases, we get a~competitive ratio of $\min \{2H_n,\frac{2}{\alpha} +2\}$. 
\end{proof}

\section{Set Cover} 
\label{sec:set-cover}
In the \emph{set cover} problem, we are given a universe $\mathcal{U}$ of $n$ elements and a set $\mathcal{F}=\{S_{1},\dots,S_{m}\}$ of $m$ subsets $S_{1},\dots,S_{m}\subseteq \mathcal{U}$ such that $S_{1}\cup\dots \cup S_{m}=\mathcal{U}$. For each element $e\in \mathcal{U}$, let $\mathcal{F}(e)=\{S\in \mathcal{F}\mid e\in S\}$ be the collection of sets that cover it. In the online setting, a subset  $\mathcal{U'}\subseteq \mathcal{U}$ of elements arrive one by one in an arbitrary order.\footnote{While our results in the current section are expressed in terms of the size of the universe $n$, it can be modified to obtain the same asymptotic bounds in terms of the length of the element sequence $|\mathcal{U}'|$.} Upon the arrival of an element $e$, the algorithm is required to cover it (i.e., if $e$ was not previously covered by the algorithm, then the algorithm must select a set from $\mathcal{F}(e)$). We emphasize that the algorithm does not know $\mathcal{U'}$ (or its size) in advance and that any previously selected set cannot be removed from the solution obtained by the online algorithm. The cost of a solution to the set cover problem is the number of sets selected.

In the standard linear program (LP) relaxation for set cover, each set $S\in \mathcal{F}$ is associated with a variable $x_{S}$. The objective is to minimize the sum $\sum_{S\in \mathcal{F}}x_{S}$ subject to the constraints $\sum_{S\in \mathcal{F}(e)}x_{S}\geq 1$ for each element $e\in \mathcal{U'}$, and $x_{S}\geq 0$ for all $S\in\mathcal{F}$.

Recall that in the context of set cover in the RIA model, we focus on \emph{lazy} algorithms, i.e., algorithms that adhere to the following restrictions upon the arrival of an elemnt $e$: (1) if $e$ is already covered by the algorithm, then in the current round the algorithm does not select any additional sets to its solution; and (2) if $e$ is not covered yet, then in the current round the algorithm may only select sets from $\mathcal{F}(e)$.  Notice that this restriction prevents the trivial oracle strategy of simply advising to select all the sets of an optimal set cover at each round.

We describe an online algorithm with RIA for set cover in three stages.
First, we present an algorithm that obtains a fractional solution $\mathbf{x}$
to the relaxed LP.
Then, we present an online randomized rounding scheme that can be incorporated
into the fractional set cover algorithm to obtain an integral solution which
is feasible with high probability.
Finally, we present the oracle's advice. 

\paragraph*{Fractional set cover algorithm.} We use the basic discrete algorithm presented by Buchbinder and Naor in \cite[Chapter 4.2, Algorithm 1]{BuchbinderN09}.\footnote{We note that the algorithm presented in \cite{BuchbinderN09} is designed for weighted set cover. The algorithm presented in this paper is its application for the case of unit weights.} The algorithm operates as follows. Initially, set $x_{S}=0$ for all $S\in\mathcal{F}$. Upon arrival of an element $e$, if $\sum_{S\in \mathcal{F}(e)}x_S<1$, then update $x_{S}\leftarrow 2\cdot x_{S} +1/|\mathcal{F}(e)|$ for all $S\in\mathcal{F}(e)$. Observe that at the end of the round, it is guaranteed that the fractional primal solution maintained by the algorithm satisfies the constraint since the algorithm adds at least $1/|\mathcal{F}(e)|$ to the variable $x_{S}$ for each set $S\in \mathcal{F}(e)$.

Let $d=\max_{e\in \mathcal{U'}} |\mathcal{F}(e)|$ be the maximum degree of an element. The following assertion on the competitive ratio is established by Buchbinder and Naor in \cite{BuchbinderN09}.
\begin{lemma}[\cite{BuchbinderN09}]\label{lemma:fractional}
	The fractional set cover algorithm is $O(\log d)$-competitive.
\end{lemma}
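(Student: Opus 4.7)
The plan is to carry out a primal-dual analysis, which is the standard route for this kind of LP-rounding online algorithm and the one used in \cite{BuchbinderN09}. The LP relaxation has dual: maximize $\sum_{e \in \mathcal{U}'} y_e$ subject to $\sum_{e \in S} y_e \leq 1$ for every $S \in \mathcal{F}$ and $y_e \geq 0$. I would extend the primal algorithm so that whenever an arriving element $e$ triggers an update (i.e., satisfies $\sum_{S \in \mathcal{F}(e)} x_S < 1$), it also sets $y_e$ to a carefully chosen positive value $\beta$ (independent of $e$), leaving $y_e = 0$ for any element that was already fractionally covered on arrival.

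First I would bound the per-update growth of the primal objective. If the update is triggered by $e$, then
\[
\Delta\!\left(\sum_{S} x_S\right) \;=\; \sum_{S \in \mathcal{F}(e)}\!\left(x_S + \tfrac{1}{|\mathcal{F}(e)|}\right) \;=\; \sum_{S \in \mathcal{F}(e)} x_S + 1 \;<\; 2,
\]
using the triggering condition $\sum_{S \in \mathcal{F}(e)} x_S < 1$. Hence if $T$ denotes the total number of triggering elements, the final primal cost is at most $2T$.

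Next I would bound the number of updates that each set $S$ can participate in. Each update replaces $x_S$ by $2 x_S + 1/|\mathcal{F}(e)| \geq 2 x_S + 1/d$, so starting from $x_S = 0$, after $k$ updates one has $x_S \geq (2^k - 1)/d$. Since updates are only applied when the constraint is violated, $x_S < 1$ must hold prior to any update, which forces $k \leq \log_2(d+1) + 1 = O(\log d)$. Choosing $\beta = 1/(c \log d)$ for a suitable absolute constant $c$ then ensures dual feasibility: for each $S$, $\sum_{e \in S} y_e$ equals $\beta$ times the number of triggering elements in $S$, which is at most $\beta \cdot O(\log d) \leq 1$. Weak duality gives $\mathrm{OPT} \geq \sum_e y_e = \beta T$, so
\[
\sum_S x_S \;\leq\; 2T \;=\; \frac{2}{\beta}\cdot \beta T \;\leq\; O(\log d) \cdot \mathrm{OPT},
\]
which is the desired competitive ratio.

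I do not anticipate a serious obstacle: the argument is essentially bookkeeping built around the two key inequalities (doubling implies $O(\log d)$ updates per set, and triggering cap implies $\Delta \sum_S x_S < 2$). The only point requiring care is the constant in the doubling bound — it must be derived from $1/|\mathcal{F}(e)| \geq 1/d$ (rather than $1/|\mathcal{F}(e)|$ directly), since different elements have different degrees; otherwise the geometric-growth calculation and the resulting choice of $\beta$ would not line up uniformly across sets.
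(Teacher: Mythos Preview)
Your proposal is correct and is precisely the standard primal-dual analysis of this algorithm. Note, however, that the paper does not give its own proof of this lemma; it simply cites the result from \cite{BuchbinderN09}, where the argument is exactly the one you sketch (bound the primal increase per triggering element by $2$, bound the number of updates per set by $O(\log d)$ via the doubling recurrence $x_S \geq (2^k-1)/d$, and set the dual variables to $\Theta(1/\log d)$ to certify the bound via weak duality).
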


\paragraph*{Randomized rounding.} 
An online rounding scheme that randomly obtains an integral solution from the fractional set cover algorithm was constructed by Alon et al.\ in \cite{AlonAABN06}. The solution produced by the rounding scheme of \cite{AlonAABN06} is feasible with high probability while incurring a multiplicative factor of $O(\log n)$ to the expected cost. However, this rounding method does not fit our advice framework. This is because all random coins are tossed in the beginning to compute a threshold for each set. Thus, we present a slightly different rounding method that fits our framework while maintaining similar guarantees. 

The rounding procedure operates as follows. Consider an element $e$ and let $\mathbf{x}$ and $\mathbf{x}_{int}$ be the solution maintained by the fractional algorithm and the (integral) solution maintained by the rounding scheme, respectively, at the time of $e$'s arrival. If $e$ is already covered by either the current fractional solution or the current integral solution produced by the rounding, then we do nothing (we will later show that the feasibility of $\mathbf{x}_{int}$ is maintained with high probability in this case). Otherwise ($e$ is not covered by both solutions), we update $\mathbf{x}$ according to the fractional algorithm. For each $S\in \mathcal{F}(e)$, let $x^{beg}_{S}$ be the value of the variable $x_{S}$ at the beginning of the round and let  $\delta(S)=x^{beg}_{S}+1/|\mathcal{F}(e)|$ be the additive increase to $x_{S}$ that occurs during the round. The rounding is obtained by independently selecting each set $S\in \mathcal{F}(e)$ to the cover with probability $\min\{1,\delta(S)\cdot \Theta(\log n)\}$. 

We refer to the randomized algorithm described above (i.e., the fractional set cover algorithm combined with the rounding scheme) as RandSC. The properties of RandSC are described in the following lemma.

\begin{lemma}\label{lemma:rand-sc}
RandSC is $O(\log n\log d)$-competitive and computes a feasible solution with high probability.\footnote{For simplicity, RandSC is described as a Monte Carlo algorithm. It can be easily transformed into a Las Vegas algorithm as follows: whenever an element $e$ is not covered by RandSC upon the end of a round, select an arbitrary set that covers $e$ into the solution. Notice that the added expected cost is negligible.}
\end{lemma}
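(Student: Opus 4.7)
The plan is to establish the two assertions separately: bound the expected integral cost of RandSC by $O(\log n)$ times the cost of the underlying fractional solution (and invoke Lemma~\ref{lemma:fractional} to get the $O(\log n\log d)$ competitive ratio), and then bound the probability that any fixed element is left uncovered by $1/n^{2}$, closing the feasibility claim by a union bound over the $n$ elements. Throughout, write $c=\Theta(\log n)$ for the rounding constant and observe that the rounding is performed independently across sets and across rounds, which is precisely what makes both parts clean.

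For the cost bound, I would use linearity of expectation on the rounding step. In any round in which RandSC updates the fractional solution (i.e., the arriving element $e$ is currently uncovered), the expected number of sets added to the integral solution during that round is
\[
\sum_{S\in \mathcal{F}(e)} \min\{1,\, c\cdot \delta(S)\} \;\leq\; c \cdot \sum_{S\in \mathcal{F}(e)} \delta(S),
\]
and $\sum_{S\in \mathcal{F}(e)} \delta(S)$ is exactly the increase of the fractional objective $\sum_S x_S$ during that round. Summing over all rounds yields $\mathbb{E}[\text{cost of RandSC}]\leq c\cdot \sum_S x_S^{\mathrm{final}} = O(\log n\log d)\cdot \OPT$ by Lemma~\ref{lemma:fractional}.

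For feasibility, fix an element $e\in \mathcal{U}'$ and consider the end of its round. For each $S\in \mathcal{F}(e)$, let $\delta_{S,1},\delta_{S,2},\dots$ be the sequence of increments to $x_S$ up to and including $e$'s round; by construction $\sum_j \delta_{S,j}=x_S$, and in each such round $S$ is selected for the integral cover with probability $\min\{1,\,c\cdot \delta_{S,j}\}$, independently across rounds and across sets. If $c\cdot \delta_{S,j}\geq 1$ for some $j$ then $S$ is deterministically in the solution; otherwise, using $1-p\leq e^{-p}$,
\[
\Pr[S\text{ is not selected}] \;=\; \prod_j (1-c\cdot \delta_{S,j}) \;\leq\; e^{-c\cdot x_S}.
\]
Using independence across $S\in \mathcal{F}(e)$ together with $\sum_{S\in \mathcal{F}(e)} x_S\geq 1$ at the end of $e$'s round (which holds whether or not $e$ triggered an update in this round), the probability that no $S\in \mathcal{F}(e)$ is selected is at most $e^{-c}\leq 1/n^{2}$ for $c$ a sufficiently large constant times $\log n$. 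A union bound over the $n$ elements completes the feasibility argument.

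The main obstacle is the bookkeeping in the feasibility step: the rounding trials for a fixed set $S$ are spread across (potentially non-consecutive) rounds triggered by distinct uncovered elements in $S$, so one must verify that their increments telescope to the final $x_S$, that the trials are genuinely independent (this follows from the algorithm's description but needs to be stated), and that the argument uniformly handles both the case in which $e$ itself triggers an update in its round and the case in which $e$ is already fractionally covered upon arrival.
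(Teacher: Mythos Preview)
Your proof is correct and essentially identical to the paper's: both bound the expected integral cost by linearity of expectation over the increments $\delta(S)$ (which telescope to $c\cdot\sum_S x_S$) and then invoke Lemma~\ref{lemma:fractional}, and both establish feasibility via independence across sets and rounds, the inequality $1-p\le e^{-p}$, the fractional coverage constraint $\sum_{S\in\mathcal{F}(e)}x_S\ge 1$, and a union bound over the $n$ elements. The only cosmetic difference is that the paper writes the feasibility double product $\prod_{i}\prod_{j}(1-p_{i,j})$ in one step rather than first bounding per set and then taking the product over $\mathcal{F}(e)$.
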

\begin{proof}
	Let $\mathbf{x}$ be the solution obtained by the fractional algorithm at termination. Recall that in each round, set $S$ is selected with probability at most $\delta(S)\cdot c\log n$ (for a constant $c>0$). By linearity of expectation, the total expected cost associated with $S$ is $ O(\log n)\cdot x_{S}$. Thus, the expected cost of RandSC is $O (\log n)\cdot \sum_{S\in\mathcal{F}}x_{S}=O(\log n\log d)\cdot\OPT$.
	
	We now bound the probability that there exists an element that was not covered by the integral solution produced by RandSC when it arrived. Consider an element $e'$ arriving at round $r$. Notice that by construction, $e'$ must be covered by the fractional solution at the end of round $r$. We argue that this implies that $e'$ is covered by the integral solution with high probability. Let $\ell=|\mathcal{F}(e')|$ and let $S^{1},\dots S^{\ell}$ denote the sets in $\mathcal{F}(e')$. Let us denote by $\delta_{i,j}$ the increase to the variable $x_{S^{i}}$ associated with set $S^{i}$ in round $j$ and let $p_{i,j}$ the probability that $S^{i}$ was selected to the integral solution at round $j$. If $p_{i,j}=1$ for some $i\leq \ell$ and $j\leq r$, then $e'$ is covered by the end of round $r$ with probability $1$. Otherwise, due to the independence of selection events, the probability that $e'$ is not covered by the integral solution at the end of round $r$ is
	$$\prod_{i=1}^{\ell}\prod_{j=1}^{r}(1-p_{i,j})\leq e^{-\sum_{i=1}^{\ell}\sum_{j=1}^{r}p_{i,j}}=e^{-c\log n\sum_{i=1}^{\ell}\sum_{j=1}^{r}\delta_{i,j}}\leq n^{-c},$$
	where the final inequality holds because the fractional algorithm guarantees that $e'$ is covered at round $r$ and thus $\sum_{i=1}^{\ell}\sum_{j=1}^{r}\delta_{i,j}\geq 1$. By a union bound argument, the probability that there exists a set that is not covered by the integral solution is at most $n^{1-c}$. Thus, RandSC produces a feasible solution with probability at least $1-1/n^{c-1}$.   
\end{proof}

\paragraph*{Oracle's advice.} 
The idea of the oracle's advice is to boost the probability of selecting "good" sets while not losing the probabilistic feasibility guarantee of Lemma \ref{lemma:rand-sc}. For the sake of analysis, let us assume that the oracle is randomized (observe that this assumption does not enhance the oracle's power since the oracle can deterministically compute an optimal realization of the randomized selection). Let $\mathcal{A^{*}}\subseteq \mathcal{F}$ be an optimal solution for the set cover instance. Consider the arrival of an element $e$ that was not covered yet by both the fractional and integral solutions and let $p_{S}$ be the probability that set $S$ is selected in the current round of RandSC for each set $S\in\mathcal{F}(e)$. The oracle's advice is as follows: (1) each set $S\in \mathcal{F}(e)\cap\mathcal{A^{*}}$ is selected to the advice; and (2) each set $S\in \mathcal{F}(e)-\mathcal{A^{*}}$ is independently selected to the advice with probability $p_{S}$. Notice that the argument used in Lemma \ref{lemma:rand-sc} regarding the feasibility of the solution still holds since the oracle does not decrease the selection probability of any set at a given round.
Denoting this oracle by $O_{boost}$, we can establish the following theorem.

\begin{theorem}\label{theorem:randsc}
	The competitive ratio of RandSC with the oracle $O_{boost}$ against an oblivious adversary is $O(\log n)\cdot\min \{1/\alpha,\log d\}$, where $0\leq \alpha\leq 1$ is the infusion parameter.
\end{theorem}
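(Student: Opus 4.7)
The plan is to establish two separate upper bounds on the expected cost of RandSC augmented with $O_{boost}$ --- namely $O(\log n\log d)\cdot\OPT$ (a ``robustness'' bound) and $O(\log n/\alpha)\cdot\OPT$ (an ``advice-driven'' bound) --- and take their minimum. Fix an optimal cover $\mathcal{A}^{*}\subseteq\mathcal{F}$ of size $\OPT$, and decompose the algorithm's expected cost into the number of distinct sets bought from $\mathcal{A}^{*}$ and the number bought from $\mathcal{F}\setminus\mathcal{A}^{*}$. The former is trivially at most $\OPT$ regardless of $\alpha$, so only the non-$\mathcal{A}^{*}$ contribution needs work in both bounds.

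For the robustness bound, I would observe that $O_{boost}$ preserves the marginal selection probability of every set $S\notin\mathcal{A}^{*}$: in both the infused and the non-infused branches $S$ is kept independently with probability $p_{S}$. Consequently the expected non-$\mathcal{A}^{*}$ contribution is at most $O(\log n)\sum_{S}x_{S}$ exactly as in the proof of Lemma~\ref{lemma:rand-sc}, and Lemma~\ref{lemma:fractional} yields $O(\log n\log d)\cdot\OPT$. A minor technical point to check is that the fractional solution $\mathbf{x}$ may evolve slightly differently under advice (since fewer elements trigger a fractional update), but $\sum_{S}x_{S}$ can only shrink, so the bound is preserved.

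For the advice-driven bound, let $T$ denote the number of \emph{active} rounds, i.e.\ those in which the arriving element $e$ is uncovered by both the fractional and the integral solution at the start of the round. Within an active round, the expected number of non-$\mathcal{A}^{*}$ sets selected is at most $\sum_{S\in\mathcal{F}(e)}p_{S}\leq c\log n\cdot\sum_{S\in\mathcal{F}(e)}\delta(S)$, and since $e$ is fractionally uncovered at the start of the round we have $\sum_{S\in\mathcal{F}(e)}\delta(S)=\sum_{S\in\mathcal{F}(e)}x^{beg}_{S}+1<2$, yielding a per-active-round expected cost of $O(\log n)$. It therefore suffices to show $\mathbb{E}[T]\leq\OPT/\alpha$.

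To bound $\mathbb{E}[T]$, I would use a stopping-time coupling. Each active round is infused independently with probability $\alpha$; when infusion occurs, the oracle buys every set in $\mathcal{F}(e)\cap\mathcal{A}^{*}$, and because $e$ is uncovered while $\mathcal{A}^{*}$ covers $e$, at least one previously unpurchased set of $\mathcal{A}^{*}$ is acquired. Hence the number of infused active rounds is deterministically at most $|\mathcal{A}^{*}|=\OPT$, and once all $\OPT$ sets of $\mathcal{A}^{*}$ have been purchased no future element can become uncovered. Coupling the infusion indicators of active rounds to an i.i.d.\ $\mathrm{Bernoulli}(\alpha)$ stream $Y_{1},Y_{2},\ldots$, $T$ is dominated by the first index $\tau$ at which $Y_{1}+\cdots+Y_{\tau}=\OPT$, whose expectation is $\OPT/\alpha$. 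Multiplying by the per-round $O(\log n)$ bound and adding the $\OPT$ charge for $\mathcal{A}^{*}$ sets yields $O(\log n/\alpha)\cdot\OPT$. The step I expect to be most delicate is this coupling: one has to carefully argue that the infusion bits can be treated as independent of the algorithm's history so that the comparison with i.i.d.\ Bernoullis is valid, and to verify that no further round can be active once $\mathcal{A}^{*}$ is fully purchased, irrespective of how subsequent requests are revealed.
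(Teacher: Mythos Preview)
Your proposal is correct and follows essentially the same approach as the paper's proof: the same decomposition into the $\mathcal{A}^{*}$ contribution (trivially $\leq\OPT$) and the non-$\mathcal{A}^{*}$ contribution, the same robustness argument via Lemma~\ref{lemma:rand-sc} and Lemma~\ref{lemma:fractional}, and the same advice-driven argument bounding the per-round non-$\mathcal{A}^{*}$ cost by $O(\log n)$ using $\sum_{S\in\mathcal{F}(e)}x^{beg}_{S}<1$ and bounding the expected number of active (``selection'') rounds by $\OPT/\alpha$. Your Wald/coupling treatment of $\mathbb{E}[T]\leq\OPT/\alpha$ and your remark on the monotonicity of the fractional cost are slightly more explicit than the paper's one-line ``hence'', but the substance is identical.
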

\begin{proof}
	We start by showing that RandSC with $O_{boost}$ is $O(\log n\log d)$-competitive. Notice that by Lemma \ref{lemma:rand-sc}, the total expected cost associated with sets $S\in \mathcal{F}-\mathcal{A^{*}}$ is $O(\log n \log d)\cdot \OPT$. In addition, the total cost of sets in $\mathcal{A^{*}}$ is bounded by $|A^{*}|=\OPT$. Therefore, the expected cost of the solution produced by RandSC with $O_{boost}$ is $O(\log n\log d)\cdot \OPT$.
	
	We now show that RandSC with $O_{boost}$ is $O(\frac{\log  n}{\alpha})$-competitive. Consider the run of RandSC with $O_{boost}$ on some element sequence. We refer to a round as a selection round if there exists a set that is selected with a positive probability in that round. Notice that we can bound the cost of RandSC with $O_{boost}$ only in selection rounds (for non-selection rounds no cost is incurred). Observe that in each selection round, the probability of selecting a set from $\mathcal{A^{*}}$ is at least $\alpha$ (the probability of receiving advice). Moreover, if at some point in the execution all sets from $\mathcal{A^{*}}$ were selected, then there are no selection rounds after that point (since $\mathcal{A^{*}}$ covers all elements). Hence, the expected number of selection rounds during the execution is at most $|\mathcal{A^{*}}|/\alpha$.
	
	To complete our analysis, we argue that the expected cost associated with sets that are not in $\mathcal{A^{*}}$ at each selection round is $O(\log n)$. Consider a selection round in which an elements $e$ arrived. Recall that for each set $S\in \mathcal{F}(e)- \mathcal{A^{*}}$, we define $\delta(S)=x^{beg}_{S}+1/|\mathcal{F}(e)|$, where $x^{beg}_{S}$ is the value of variable $x_{S}$ at the beginning of the round, and select each set $S \in \mathcal{F}(e)- \mathcal{A^{*}}$ to the cover with probability $\min\{1,\delta(S)\cdot \Theta(\log n)\}$. Thus, the total expected cost that comes from the sets $S\in \mathcal{F}(e)- \mathcal{A^{*}}$ in the round is bounded by $O(\log n)\cdot \sum_{S\in \mathcal{F}(e)- \mathcal{A^{*}}}x^{beg}_{S}+\frac{1}{|F(e)|}\leq O(\log n) \cdot 2=O(\log n)$. Since the total cost associated with sets from  $\mathcal{A^{*}}$ is at most $|\mathcal{A^{*}}|$, we get that the total expected cost of RandSC with $O_{boost}$ is $O(\log n)\cdot |\mathcal{A^{*}}|/\alpha=O(\frac{\log  n}{\alpha})\cdot \OPT$.
\end{proof}

\section{Lower Bounds}
\label{sec:lower-bound}

In this section we show fundamental limitations of online algorithms with RIA.
First, we give a lower bound for competitiveness with RIA for online set cover, under the assumption that the algorithm is lazy (buys sets only when they are needed to cover the current element).
Second, we give a lower bound for competitiveness with RIA for paging, that we improve to an asymptotically tight lower bound for the case of lazy algorithms.
The lower bound for paging implies the lower bound for the uniform metrical task system.

\subsection{Online Set Cover}

We give a lower bound for the competitive ratio of any online randomized algorithm with RIA for online set cover. The construction of the input sequence is similar to the lower bounds given in~\cite[Theorem 2.2.1]{Korman04} and \cite[Lemma 4.6]{BuchbinderN09}.
The bound is given for randomness-oblivious (defined in Section~\ref{ssec:randomness-oblivious}) and lazy algorithms (lazy algorithms 
are allowed to buy a set only if it contains the current
element).

\begin{theorem}
    Assume that an online randomized algorithm with RIA for online set-cover is lazy, randomness-oblivious and strictly $c$-competitive against the oblivious adversary.
    Then $c \ge \min \{ \frac{1}{2}\log n, \frac{1}{2\alpha} \}$, where $n$ is the size of the universe of element, and $\alpha$ is the infusion parameter.
    \label{thm:lb_setn}
\end{theorem}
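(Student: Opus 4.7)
My plan is to apply Yao's principle: exhibit a distribution $\mathcal{D}$ over input sequences on which every deterministic lazy online algorithm with RIA at infusion rate $\alpha$ incurs expected cost $\Omega(\min\{\log n, 1/\alpha\})\cdot\OPT$, which transfers to the claimed lower bound for randomized algorithms against the oblivious adversary.

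The instance is the ``bit-testing'' construction of \cite[Thm.~2.2.1]{Korman04} and \cite[Lem.~4.6]{BuchbinderN09}. Put $k=\log n$, take the universe $\{0,1\}^k$, and introduce a set $S_{i,b}=\{x : x_i = b\}$ for each $i\in[k]$ and $b\in\{0,1\}$; each element lies in exactly $k$ sets. The distribution $\mathcal{D}$ samples a uniformly random target $(i^*,b^*)\in[k]\times\{0,1\}$ and a uniformly random permutation $\pi$ of $[k]\setminus\{i^*\}$, then plays elements $e_1,\ldots,e_{k-1}$ crafted from $(i^*,b^*,\pi)$ so that every $e_j$ agrees with $b^*$ on coordinate $i^*$ (making $S_{i^*,b^*}$ a single-set cover and giving $\OPT=1$), while $e_j$ disagrees with the ``round-$l$ target'' on coordinate $\pi(l)$ for every $l<j$ (forcing a lazy algorithm to purchase a fresh set in each round until it happens to buy the good set $S_{i^*,b^*}$).

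The core of the argument is a round-by-round bound on the probability that a deterministic lazy $A$ buys the good set. At round $j$ the algorithm must buy one of the $k$ sets containing $e_j$, and only one of them, $S_{i^*,b^*}$, covers the future elements too. By the randomness-oblivious restriction of Section~\ref{ssec:randomness-oblivious}, in a non-advice round the bits in $\mathcal{B}_j$ are uniform and independent of $(i^*,b^*,\pi)$ and are not retained to round $j+1$; consequently the posterior on $(i^*,b^*)$ conditioned on $A$'s public history remains uniform on the coordinates not yet eliminated by past requests/purchases, and the round-$j$ success probability is at most $\alpha+(1-\alpha)/(k-j+1)$. A standard geometric waiting-time calculation then lower bounds the expected first round of success (hence $A$'s total cost, since one set is bought per round until the good set appears) by $\Omega(\min\{k,1/\alpha\})$. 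Coupled with $\OPT=1$ and a careful tracking of constants this yields $c\geq\min\{\tfrac12\log n,\tfrac1{2\alpha}\}$.

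The main obstacle I anticipate is formalizing the posterior-preservation claim, i.e.\ showing that $A$ cannot aggregate information about $(i^*,b^*,\pi)$ across non-advice rounds. I plan to encapsulate this in a short inductive lemma, leveraging the randomness-oblivious restriction (past random bit strings are inaccessible in later rounds) together with the oblivious adversary's commitment to $\mathcal{D}$ before any coin is flipped. A secondary subtlety is that the oracle sees past answers and could in principle smuggle multiple bits of the target into a single advice string; however, once $A$ has acquired $S_{i^*,b^*}$ no further costs are incurred, so tracking only the first round at which the good set is bought suffices, and the $\alpha$ term in the per-round bound already absorbs any advantage that cross-round oracle communication might confer.
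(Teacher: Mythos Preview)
Your approach diverges from the paper's in two ways --- the instance and the proof technique --- and the first of these is fatal.

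\paragraph*{The instance.}
In the halfspace family you chose, for every coordinate $i$ the pair $S_{i,0}\cup S_{i,1}$ already equals the whole universe. Consequently a lazy deterministic algorithm can always finish in at most two purchases on \emph{any} request sequence: in round~$1$ it buys $S_{i_1,(e_1)_{i_1}}$ for an arbitrary fixed $i_1$; at the first later round $j$ whose element $e_j$ is uncovered we necessarily have $(e_j)_{i_1}=1-(e_1)_{i_1}$, so the algorithm may (lazily) buy $S_{i_1,1-(e_1)_{i_1}}$, after which every element is covered. Hence your distribution cannot force expected cost larger than~$2$, let alone $\tfrac12\log n$. The sentence ``forcing a lazy algorithm to purchase a fresh set in each round'' is therefore false for this set system, and no posterior-preservation lemma can rescue it. (The references you cite use this hypercube instance for the \emph{fractional} lower bound; it does not transfer to integral lazy algorithms.)

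\paragraph*{What the paper does instead.}
The paper does \emph{not} invoke Yao's principle here. It takes the complete-binary-tree instance --- elements are the tree nodes, sets are the $d$ root-to-leaf paths --- and lets an oblivious adversary, who knows the algorithm's description and hence its probability distribution over owned sets, walk down the tree adaptively: at each internal node it descends to the child whose covering paths carry the smaller total probability mass. This guarantees that at every level at least half of the algorithm's expected mass is wasted, so each of the $\log d\approx\log n$ steps contributes at least $\tfrac12$ to the expected cost; combined with the $1/\alpha$ waiting-time bound for the first advice round, one gets $\min\{\tfrac12\log n,\tfrac1{2\alpha}\}$ against $\OPT=1$. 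The essential feature that makes this work --- and that your instance lacks --- is that two root-to-leaf paths passing through sibling nodes are disjoint below that level, so no pair of sets can cover the adversary's future requests.
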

\begin{proof}
    Fix any lazy, randomness-oblivious online randomized algorithm \ALG with RIA, its oracle $O$ and the infusion parameter $\alpha$.
    The adversary is oblivious to random choices of the algorithm, but it has access to the description of the algorithm, the oracle and the infusion parameter, hence can maintain the probability distribution of \ALG's cache configurations.

    Consider a complete binary tree with $d$ leaves. The items to be covered are the nodes of the tree, and the sets are the $d$ root-leaf paths.
    Our sequence $\sigma$ will be the items on one root-leaf path, starting from the root and going downward. 

    We chose the sequence of items to request corresponding to a path in the complete binary tree as follows.
    Let $F(e)$ be the family of sets that cover the item $e$, and let $p_S$ be the probability that \ALG currently has the set $S$ in the solution.
    The first request is to the root of the tree.
    For the $i$-th request, we choose one of the children, $x$ or $y$ of the item requested in the $(i-1)$-th request, depending on the probability distribution of the sets that cover these items.
    To decide between $x$ and $y$, we choose the item $r \in \{ x, y \}$ with no smaller sum of the probability mass $\sum_{F(x)} p_S$.
   
    We consider two cases depending on whether or not the algorithm received advice for $\sigma$.

    \begin{enumerate}
        \item 
        Assume the algorithm did not receive advice for $\sigma$.
        In such case, the algorithm acts as an online algorithm without advice.
        Notice that the total probability mass of sets that do not appear in subsequent iterations add up to at least $1/2$.
        Each path has length $\log n$, and the algorithm pays at least $\frac{1}{2}$ for each such round, hence overall the algorithm pays $\frac{1}{2}\log d$.

        \item Assume the algorithm received advice for $\sigma$.
            In expectation, the number of rounds before getting advice is $\frac{1}{\alpha}$, and the algorithm pays at least $\frac{1}{2}$ for each such round, hence in total the algorithm pays at least $\frac{1}{2} \cdot \frac{1}{\alpha} = \frac{1}{2 \alpha}$.
    \end{enumerate}

    Note that $\sigma$ can be covered by a single set, namely the one that corresponds to the leaf where the path ends, hence $\OPT(\sigma) = 1$.
    The online algorithm pays at least $\min \{ \frac{1}{2}\log d, \frac{1}{\alpha} \}$ for any sequence $\sigma$ of the form described above, hence \ALG is at least strictly $\min \{ \frac{1}{2}\log d, \frac{1}{2\alpha} \}$-competitive.
\end{proof}

For lazy algorithms, we can obtain a lower bound in terms of the number of $d$.
We say that an online algorithm for online set cover is \emph{lazy} if it buys a set only if the current element is not yet covered, and then it may buy only sets that cover the current element.
The next bound is stronger than the previous one, as it the bound is on the competitive ratio in the classic sense, with the possible additive constant, as opposed to the previous bound on the strict competitiveness.

\begin{theorem}
    Assume that an online randomized algorithm with RIA for online set-cover is lazy, randomness-oblivious and $c$-competitive against the oblivious adversary.
    Then $c \ge \min \{ \frac{1}{2}\log d, \frac{1}{2\alpha} \}$, where $d$ is the maximum element degree, and $\alpha$ is the infusion parameter.
    \label{thm:lb_setcover}
\end{theorem}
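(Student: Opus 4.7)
The plan is to reduce Theorem~\ref{thm:lb_setcover} to the strict-competitiveness bound of Theorem~\ref{thm:lb_setn} by amortizing the additive constant $b$ away over many independent repetitions of the hard instance. Concretely, I would fix a parameter $k$ and construct an input $\sigma$ that is the sequential concatenation of $k$ binary-tree subinstances, each of which is a fresh copy of the construction from the proof of Theorem~\ref{thm:lb_setn} on a private universe of $O(d)$ elements with its own $d$ sets (the root-leaf paths of a complete binary tree with $d$ leaves).

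Within each copy, the adversary plays the same probabilistic path-selection rule as in Theorem~\ref{thm:lb_setn}, descending based on the current probability mass that $\ALG$ assigns to the covering sets of the two children of the current node. The per-copy expected cost bound from that proof, namely $\min \{ \frac{1}{2}\log d, \frac{1}{2\alpha} \}$, transfers verbatim to every copy here because (i) the set families across copies are pairwise disjoint, so by laziness the sets $\ALG$ holds from earlier copies cannot cover any element in the current copy; and (ii) randomness-obliviousness prevents $\ALG$ from using advice bits received in one copy to improve its decisions in another copy. By linearity of expectation across the $k$ copies, $\mathbb{E}[\ALG(\sigma)] \ge k \cdot \min \{ \frac{1}{2}\log d, \frac{1}{2\alpha} \}$, while $\OPT(\sigma) = k$, since each copy admits a cover of size one (the single path leading to the leaf reached by the adversary in that copy).

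Plugging into the $c$-competitive inequality $\mathbb{E}[\ALG(\sigma)] \le c\cdot \OPT(\sigma) + b$ yields $k \cdot \min \{ \frac{1}{2}\log d, \frac{1}{2\alpha} \} \le c\, k + b$, hence $c \ge \min \{ \frac{1}{2}\log d, \frac{1}{2\alpha} \} - b/k$, and letting $k \to \infty$ gives the theorem.

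The main obstacle is making the per-copy cost argument formally independent across copies despite the single global oracle and the persistent state of $\ALG$. The cleanest way is to condition on the entire execution during the first $i-1$ copies and argue that, projected onto copy $i$'s (disjoint) set family, $\ALG$'s state evolves exactly as in the single-copy analysis of Theorem~\ref{thm:lb_setn}, so that the conditional expected cost on copy $i$ is at least $\min \{ \frac{1}{2}\log d, \frac{1}{2\alpha} \}$ regardless of that conditioning. This reduces to three observations: the infusion coins are tossed fresh in every round, the oracle restricted to a copy-$i$ element can only advise sets from copy $i$'s family (since the algorithm is lazy), and past advice bits are inaccessible via randomness-obliviousness, so the only inter-copy dependence is through sets that are useless for laziness reasons.
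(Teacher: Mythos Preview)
Your proposal is correct and follows essentially the same approach as the paper: both arguments concatenate disjoint copies of the binary-tree instance from Theorem~\ref{thm:lb_setn}, invoke laziness and randomness-obliviousness to keep the per-copy lower bound of $\min\{\tfrac{1}{2}\log d,\tfrac{1}{2\alpha}\}$ intact, and then let the number of copies grow to absorb the additive constant $b$. Your write-up is in fact slightly more explicit than the paper's about the conditioning argument that justifies independence across copies and about the limit $k\to\infty$ that converts the strict-competitiveness bound into a bound on the non-strict competitive ratio.
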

\begin{proof}
    We repeat the construction from the previous proof of Theorem~\ref{thm:lb_setn} in phases, in each phase using a binary tree of $2d$ items.

    As the algorithm is lazy, it cannot buy sets from future phases, and the sets used in different phases are disjoint, hence advice received in any phase cannot decrease the cost of the algorithm in any future phase.
    
    Fix any phase.
    We consider two cases depending on whether or not the algorithm received advice in this phase.
    If the algorithm received advice, then it pays at least $\frac{1}{2} \cdot \frac{1}{\alpha} = \frac{1}{2 \alpha}$, as the expected number of rounds in this phase before receiving advice concerning sets in this phase is $\frac{1}{\alpha}$.
    Otherwise, if the algorithm did not receive advice, then it pays at least $\frac{1}{2} \cdot \log d$, following the arguments from the previous proof.
    
    In total, the algorithm pays at least $\min \{ \frac{1}{2}\log d, \frac{1}{2\alpha} \}$ in each phase, and an optimal algorithm can cover the items in each phase using a single set, hence the algorithm is at least $\min \{ \frac{1}{2}\log d, \frac{1}{2\alpha} \}$-competitive.

    Note that we can repeat this construction arbitrary number of iterations to obtain a lower bound on the competitive ratio, as opposed to a lower bound on strict competitive ratio.
    In each iteration, we use a new set of items and sets corresponding to a binary balanced tree, and the maximum number of sets that cover any item $d$ does not increase by repeating the construction. Hence, no randomized algorithm with infused advice can be better than $\min \{ \frac{1}{2}\log d, \frac{1}{2\alpha} \}$-competitive.
\end{proof}

\subsection{Paging and Metrical Task Systems}

In this section we give a lower bound for competitiveness of randomized online
algorithms with RIA for paging.
The uniform metrical task system problem generalizes the paging problem
on instances that include
$n = k + 1$
pages, hence the lower bound for paging is a common lower bound for paging and
uniform metrical task system.
We restrict our attention to randomness-oblivious algorithms, as defined in
Section~\ref{ssec:randomness-oblivious}.
Our lower bound for any randomness-oblivious algorithm is loose by a~factor of
$1/k$;
but with the~natural assumption that the algorithm is \emph{lazy}, we get rid
of the $1/k$ factor, and for lazy algorithms the upper bounds for paging
(Theorem~\ref{thm:rmark-general}) and uniform metrical task systems
(Theorem 11 in the full version \cite{full-version}) are asymptotically optimal.

To show the lower bound in this section, we apply Yao's Minimax Principle~\cite{Yao77} to competitiveness of randomized online algorithms.
In the case of classic online algorithms, the lower bound for the competitiveness of the best \emph{deterministic} online algorithm on a distribution of inputs implies a lower bound on the competitiveness of any randomized online algorithm on any input sequence.

We define a deterministic equivalent of algorithms with RIA.
To this end, we add to each request the information whether the request is served by a deterministic online algorithm or by the oracle.
We will analyze performance of such an algorithm on a distribution of requests, where each round is served by the algorithm with probability $1-\alpha$, and by the oracle with probability $\alpha$.
To give a lower bound for randomness-oblivious algorithms (as defined in
Section~\ref{ssec:randomness-oblivious}), we need to define a deterministic
equivalent of such algorithms that we refer to as \emph{deterministic
advice-oblivious} algorithms: the answer for each request not served by the
oracle is determined by the current request, previous requests and previous
answers.

To apply Yao's priciple to competitiveness of randomized online algorithms with RIA, we construct a matrix representation of the game, where the row player corresponds to a deterministic advice-oblivious algorithm combined with the offline oracle algorithm, and the column player represents the adversary who specifies the input sequence. The value in each row-column pair of the matrix equals the expected cost incurred by the algorithm-oracle pair on the input sequence, divided by the cost of an optimal offline solution for the input sequence. The choice of whether the online algorithm or the oracle serves a request is beyond the control of both the adversary and the online algorithm, and to compute the value for a row-column pair we take the expectation over all possibilities where for each request independently, the deterministic algorithm serves the request with probability $1-\alpha$, and the oracle serves the request with probability $\alpha$.
Notably, a randomness-oblivious algorithm is no more powerful than a distribution over the deterministic advice-oblivious algorithms.

\begin{theorem}
    Assume that an online randomized algorithm with RIA for online paging is randomness-oblivious and $c$-competitive against the oblivious adversary.
    Then $c \ge \min \{ H_k, \frac{1}{k \cdot \alpha} \}$, where $\alpha$ is the infusion parameter.
    \label{thm:lb_paging}
\end{theorem}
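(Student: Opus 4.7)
The plan is to apply Yao's minimax principle exactly as set up in the paragraphs above the statement: it suffices to exhibit a distribution $\mathcal{D}$ over input sequences such that every \emph{deterministic advice-oblivious} algorithm $A$ paired with an optimal oracle $O$ has expected competitive ratio at least $\min\{H_k, \tfrac{1}{k\alpha}\}$ on inputs drawn from $\mathcal{D}$, where each round is independently handled by $A$ (with probability $1-\alpha$) or by $O$ (with probability $\alpha$). I would use the instance with $n = k+1$ pages and adopt the classical Fiat--Karp--Luby--McGeoch--Sleator--Young hard distribution: a sequence of $T$ requests each independently uniform over $\{1,\dots,k+1\}$. A standard coupon-collector computation with the $k$-phase partition from Section~\ref{sec:pagining} gives $\mathbb{E}[\OPT(\sigma)] = \Theta(T/(kH_k))$, since an expected phase length of $\Theta(kH_k)$ corresponds to exactly one LFD fault per phase.

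The second step is the $H_k$ branch, which I would handle by the classical argument adapted to the RIA setting. On any round in which $A$ is in charge, which occurs with probability $1-\alpha$, the request is uniform in $\{1,\dots,k+1\}$ while exactly one page is missing from $A$'s cache, so $A$ faults with probability exactly $\tfrac{1}{k+1}$ regardless of what it did in past rounds (this is where advice-obliviousness is crucial: $A$ cannot carry the oracle's past bits forward to influence its deterministic reasoning). Summing over the $\approx (1-\alpha)T$ such rounds yields expected algorithm cost $\Omega((1-\alpha)T/(k+1))$; dividing by $\mathbb{E}[\OPT(\sigma)]$ delivers the ratio $\Omega((1-\alpha)H_k)$, which absorbs into $H_k$ when compared against the other branch.

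The heart of the proof, and the main obstacle, is the $\tfrac{1}{k\alpha}$ branch, where non-laziness plays a central role. I would set up a charging argument that gives each oracle-handled round a bounded ``savings budget'' against the algorithm's would-be faults. The idea is to show that, in any phase of expected length $\Theta(kH_k)$, the oracle is invoked $\Theta(\alpha k H_k)$ times in expectation, and each invocation can eliminate at most $O(1)$ future faults of $A$, so the oracle can reduce the $\Theta(H_k)$ expected faults per phase only by an additive $O(\alpha k H_k)$. Dividing the resulting lower bound on algorithm cost by $\mathbb{E}[\OPT(\sigma)] = \Theta(1/\text{phase})$, and optimizing the trade-off between $H_k$ and the $k\alpha$-weighted savings, gives the promised $\tfrac{1}{k\alpha}$ bound in the regime where it is nontrivial ($\alpha \lesssim 1/(kH_k)$ makes the first branch active, otherwise the second dominates).

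The crux of the difficulty, and what differentiates this proof from Theorem~\ref{intro-theorem:paging-lazy-lower-bound}'s clean $1/\alpha$ bound, is the bookkeeping for non-lazy algorithms. In the lazy case, the algorithm evicts only upon a fault, so oracle interventions and algorithm evictions both occur at a rate proportional to $1/(k+1)$ (one per expected fault) and the $\alpha$-rate of advice translates transparently into a $1/\alpha$ loss. Without the laziness restriction, a~clever $A$ can exploit every round, not just fault rounds, to reshape its cache via random bits, effectively multiplying the ``action budget'' per phase from $\Theta(1)$ to $\Theta(k)$. The main work is to verify that even this wider action budget does not amplify the oracle's effective intervention rate beyond a factor of $k$, accounting for the extra factor of $1/k$ that separates the non-lazy bound from the lazy one.
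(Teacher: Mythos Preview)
Your $H_k$ branch contains a genuine gap. You claim that in every $A$-round the fault probability is exactly $1/(k+1)$ ``regardless of what it did in past rounds,'' and you justify this by advice-obliviousness. But advice-obliviousness only forbids $A$ from storing the raw \emph{bit strings} $\mathcal{B}_{i'}$; it does \emph{not} prevent $A$ from seeing past \emph{answers}, and in paging the sequence of past answers determines the current cache configuration. Concretely, if round $t-1$ was an $O$-round, the oracle --- which knows the entire request sequence --- can evict the LFD page, guaranteeing that $\sigma(t)$ is in the cache. The missing page at time $t$ is then correlated with $\sigma(t),\sigma(t+1),\dots$, so the fault probability in the $A$-round $t$ is $0$, not $1/(k+1)$. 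Your bound $\Omega((1-\alpha)T/(k+1))$ therefore does not follow; the oracle's earlier actions can ``pre-load'' the cache and eliminate faults in arbitrarily many subsequent $A$-rounds.

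The paper avoids this trap by working per phase and by conditioning explicitly on whether \emph{any} advice is infused during the phase. The two key ingredients you are missing are: (i) an argument that advice infused in \emph{earlier} phases cannot help the current phase (this uses that with $n=k+1$ every phase touches all pages, so whatever single page the oracle arranged to be absent is still requested); and (ii) in a phase with no infused advice, the cache evolves as a deterministic function of the past requests only, so the missing page is genuinely independent of the next (random) request and the $1/k$-per-round fault rate applies. For the second branch the paper's argument is also much simpler than your proposed charging scheme: before the first infused round in a phase the $1/k$-per-round rate holds, and the expected number of such rounds is $1/\alpha$, giving $1/(k\alpha)$ directly --- no savings-budget bookkeeping is needed. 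Your global charging idea (``each oracle invocation eliminates at most $O(1)$ future faults'') is not obviously correct for non-lazy algorithms either: a single oracle action that places the phase's last-requested page outside the cache can eliminate $\Theta(H_k)$ would-be faults of a bad deterministic $A$, so the $O(1)$ claim needs real justification.
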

\begin{proof}
    To prove the theorem, we apply Yao's Minimax Principle~\cite{Yao77} to competitiveness of randomized algorithms.
    Consider any deterministic advice-oblivious algorithm $A$ for paging, and construct the following distribution over input sequences.
    Each round is served by $A$ with probability $1-\alpha$, and by the oracle with probability $\alpha$.
    The distribution over requests to pages is constructed as follows.
    Let $S = \{ p_1, p_2, p_3, \ldots, p_{k+1}\}$ be a~set of $k+1$ pages.
    We construct a~probability distribution for choosing a~request sequence.
    The first request $\sigma(1)$ is chosen uniformly at random from $S$.
    Every other request $\sigma(t)$, $t > 1$, is made to a~page that is chosen uniformly at random from $S \setminus \{ \sigma(t-1) \}$.
    A phase starting with $\sigma(i)$ ends with $\sigma(j)$, where $j, j > i$ is the smallest integer such that $\{ \sigma(i), \sigma(i+1), \ldots, \sigma(j) \}$ contains $k+1$ distinct pages.

    We claim that for any advice-oblivious algorithm, the advice received in past phases cannot reduce the cost of the algorithm in future phases. 
    We argue as follows.
    First, the advice-oblivious algorithm is forbidden to store past advice in its internal memory for future use. Second, no algorithm can store meaningful advice for the future in its cache configuration: each phase contains requests to $k+1$ different items, so for any cache configuration at the start of the phase, there is always at least $1$ \emph{clean page}: a page that is requested in the phase that the algorithm does not have in the cache at the start of the phase.
    
    In our bounds, we use that the average cost of the algorithm for each request is $1/k$; this follows because the requested page is random and each of its pages is outside the cache with equal probability.

    We lower-bound the cost of the algorithm in each phase in two ways, depending on whether or not the algorithm receives advice in any round of the phase.
   \begin{enumerate}
        \item 
        Assume that the algorithm does not receive advice in any round of the phase.
        In such case, the algorithm acts as an online algorithm without advice throughout the phase, and
        the expected cost of the algorithm in the phase is at least $H_k$, following the standard arguments~\cite{MotwaniR95}:
        the expected length of the phase is $k \cdot H_k$, the average cost of the algorithm for each request is $1/k$, therefore the cost of the algorithm within a~phase is at least~$H_k$.

        \item 
        Assume that the algorithm receives advice in some round of the phase.
        To receive advice, we need in expectation $1/\alpha$ rounds prior to the advice round.
        The average cost of the algorithm for each request is $1/k$, hence the expected cost is at least $\frac{1}{k \cdot \alpha}$.
    \end{enumerate}

    An optimal offline algorithm OPT incurs 1 page fault during each phase, the algorithm pays at least $\min \{H_k, \frac{1}{k \cdot \alpha}\}$, hence by summing over all phases of $\sigma$, we arrive at the desired competitive ratio.
\end{proof}

Next, we give an improved lower bound for lazy algorithms for paging.
Recall that lazy algorithms for paging are the algorithms that 
are never allowed to change its cache configuration unless there is a page miss.
This class includes RandomMark as well as most other known online paging algorithms.
Note that this definition is slightly more general than the usual definition of lazy algorithms, where the algorithm is only allowed to fetch one page per request~\cite{Borodin1998}; the intention of this definition is that the lower bound holds for metrical task systems as well.
In the classic setting without infused advice, any algorithm can be turned to a lazy algorithm without increasing its cost; note, however, that the transformed algorithm may not be randomness-oblivious.
If we restrict our attention to randomness-oblivious algorithms, the non-lazy algorithms may have an advantage over the lazy algorithms due to non-lazy algorithm's potentially frequent interaction with the oracle, which could be used by the oracle to give advice to prefetch some items even before the first cache miss occurs.

\begin{theorem}
    \label{thm:lblazy}
    Assume that an online randomized algorithm with RIA for online paging is lazy, randomness-oblivious and $c$-competitive against the oblivious adversary.
    Then $c \ge \min \{ H_k, \frac{1}{\alpha} \}$, where $\alpha$ is the infusion parameter.
\end{theorem}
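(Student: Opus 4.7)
The plan is to mimic the proof of Theorem~\ref{thm:lb_paging} and upgrade its $1/(k\alpha)$ term to $1/\alpha$ by leveraging the laziness assumption. I would apply Yao's minimax principle on exactly the same adversary distribution used there: requests drawn independently and uniformly from $\{p_1,\ldots,p_{k+1}\}\setminus\{\text{previous request}\}$, with the phase partition in which each phase ends as soon as $k+1$ distinct pages have been seen. The expected phase length is $kH_k$, and the optimal offline algorithm incurs one fault per phase. It therefore suffices to prove that any deterministic lazy advice-oblivious algorithm paired with any oracle incurs expected cost at least $\min\{H_k, 1/\alpha\}$ per phase; summing over phases and applying Yao then yields the claimed lower bound.

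The key observation that enables the factor-of-$k$ improvement is that a lazy algorithm only reads the buffer $\mathcal{B}$ on a fault round; on a hit round the cache is forbidden to change, so whatever advice was infused into $\mathcal{B}$ is discarded. Although the oracle infuses advice per round with probability $\alpha$, among the fault rounds each is independently advised with probability $\alpha$. Consequently, the index of the first advised fault in a phase is geometric with parameter $\alpha$ and mean $1/\alpha$, in contrast to the expected $1/\alpha$ \emph{rounds} used in the proof of Theorem~\ref{thm:lb_paging}, which had to be multiplied by the $1/k$ per-round fault probability to obtain the cost.

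With this observation in hand I would carry out the same two-case analysis per phase as in the proof of Theorem~\ref{thm:lb_paging}. If no fault in the phase is advised, the algorithm is indistinguishable from its advice-oblivious shadow during the phase, and the Motwani--Raghavan calculation (per-round fault probability $1/k$ for any lazy algorithm, times expected phase length $kH_k$) gives expected cost at least $H_k$. If some fault in the phase is advised, then the algorithm necessarily incurs all faults up to and including the first advised one; since each fault costs one unit and the index of the first advised fault is geometric with mean $1/\alpha$, the expected contribution is at least $1/\alpha$. Combining the two cases gives expected per-phase cost at least $\min\{H_k, 1/\alpha\}$, which, divided by OPT's one fault per phase and summed over phases, yields the theorem.

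The main delicate point is making the second case rigorous as a coupling. I would couple the RIA-augmented algorithm $A$ with its advice-oblivious shadow $\tilde A$ so that they use the same random bits on non-advice rounds; under this coupling $A$ and $\tilde A$ make identical decisions up to the first advised fault of $A$, and hence $A$'s per-phase fault count is bounded below by $\min(G,\tilde F)$, where $G\sim\mathrm{Geom}(\alpha)$ is independent of the request sequence by advice-obliviousness and $\tilde F$ is $\tilde A$'s per-phase fault count. As in the proof of Theorem~\ref{thm:lb_paging}, I would tolerate the minor slack when exchanging $\mathbb{E}[\min\{\cdot,\cdot\}]$ for $\min\{\mathbb{E}[\cdot],\mathbb{E}[\cdot]\}$ and absorb it into the $\min$ appearing in the theorem statement.
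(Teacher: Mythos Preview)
Your proposal is essentially the same as the paper's proof: both apply Yao's principle with the uniform $(k+1)$-page request distribution, use the phase partition with $\OPT=1$ per phase, and perform the same two-case split on whether some cache-miss round in the phase is advised, exploiting that laziness together with randomness-obliviousness renders advice on hit rounds useless. The paper handles the $\mathbb{E}[\min\{\cdot,\cdot\}]$ versus $\min\{\mathbb{E}[\cdot],\mathbb{E}[\cdot]\}$ issue with the same informality you flag in your last paragraph (and note that your phrasing ``a lazy algorithm only reads $\mathcal{B}$ on a fault round'' slightly overstates laziness---the paper combines laziness with randomness-obliviousness to reach that conclusion).
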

\begin{proof}
    To prove the theorem, we apply Yao's Minimax Principle~\cite{Yao77} to competitiveness of randomized algorithms.
    Consider any deterministic advice-oblivious online algorithm and the probability distribution for choosing a~request sequence as in the proof of Theorem~\ref{thm:lb_paging}.

    We claim that for any advice-oblivious algorithm, the advice received in past phases cannot reduce the cost of the algorithm in future phases. 
    We argue as follows.
    First, the advice-oblivious algorithm is forbidden to store past advice in its internal memory for future use. Second, no algorithm can store meaningful advice for the future in its cache configuration: each phase contains requests to $k+1$ different items, so for any cache configuration at the start of the phase, there is always at least $1$ \emph{clean page}: a page that is requested in the phase that the algorithm does not have in the cache at the start of the phase.
       
    We will show that the expected cost of the algorithm is at least $\min\{ H_k, \frac{1}{\alpha} \}$ in any phase.
    We lower-bound the cost of the algorithm in each phase in two ways, depending on whether in this phase the algorithm receives advice in some round with a cache miss or not.
    \begin{enumerate}
        \item Assume that the algorithm does not receive advice in any round with a cache miss.
        Since the algorithm is lazy, advice received in rounds without cache misses does not influence the algorithm's cache configuration, and since the algorithm is advice-oblivious, it cannot store such advice either.
        In such case, the algorithm acts as an online algorithm without advice throughout the phase, and
        the expected cost of the algorithm in the phase is at least $H_k$, following the standard arguments~\cite{MotwaniR95}:
        the expected length of the phase is $k \cdot H_k$, the average cost of the algorithm for each request is $1/k$ because the requested page is random and each of its pages is outside the cache with equal probability, therefore the cost of the algorithm within a~phase is~$H_k$.

        \item Assume that the algorithm receives advice in a round with a cache miss.
        To receive advice at a round with a cache miss, we need in expectation $1/\alpha$ rounds with cache misses.
        Each round with a cache miss costs $1$, hence the expected cost of the algorithm is at least $1/\alpha$.
    \end{enumerate}
    An optimal offline algorithm OPT incurs a~single page fault during each phase, and the algorithm pays at least $\min \{H_k, \frac{1}{\alpha}\}$, hence by summing over all phases of $\sigma$, we arrive at the desired competitive ratio.
\end{proof}

The bound given in Theorem~\ref{thm:lblazy} is asymptotically tight for lazy algorithms. However, a~gap of a~constant factor of 2 remains. To address this gap, an optimal randomized algorithm for paging~\cite{McGeochS91} may be a possible direction for future studies.

\section{Conclusions}

We introduced a~novel method for the rigorous quantitative evaluation of online
algorithms that relaxes the worst-case perspective of classic competitive analysis. 
The infused advice model allows the seamless integration of machine-learned predictors with existing randomized online algorithms.

We leave several avenues for future research, in particular to explore the utility of our method applied to other randomized online algorithms.
Randomness-oblivious online algorithms are known for many online problems, e.g., all randomized memoryless algorithms~\cite{ChrobakL91} such as
the COINFLIP algorithm for file migration~\cite{Westbrook94} or the HARMONIC
algorithm for $k$-server~\cite{BartalG00, Motwani99} are randomness-oblivious.

\bibliographystyle{plainurl}
\bibliography{references}

\end{document}